\documentclass[a4paper,11pt]{article}

\bibliographystyle{IEEEtran}

\usepackage{fullpage,amsthm,color}
\usepackage{titlesec,authblk}

\definecolor{myurlcolor}{rgb}{0,0,0.4}
\definecolor{mycitecolor}{rgb}{0,0.7,0}
\definecolor{myrefcolor}{rgb}{0.7,0,0}
\usepackage{hyperref}
\hypersetup{colorlinks,
linkcolor=myrefcolor,
citecolor=mycitecolor,
urlcolor=myurlcolor}

\usepackage{amssymb, amsmath, amsthm, verbatim}
\usepackage[mathscr]{euscript}
\sloppy

\usepackage[draft]{fixme}
\usepackage{tikz}
\usepackage{subfigure}
\usepackage{multicol}
\usepackage{amsmath,bbm}
\usepackage{graphicx}
\usepackage{amsfonts}
\usepackage{amssymb}
\usepackage{amsmath, amssymb, amsthm,verbatim,graphicx,bbm}
\usepackage{color,xcolor,longtable}
\usepackage{palatino}
\usepackage{mathpazo}
\usepackage{bbm, dsfont}

\usepackage{hyperref}

\usetikzlibrary{patterns}
\newcommand{\beq}[0]{\begin{equation}}
\newcommand{\eeq}[0]{\end{equation}}

\newcommand{\I}{\mathbbm{1}}
\newcommand{\w}{\omega}
\newcommand{\ket}[1]{|#1\rangle}
\newcommand{\bra}[1]{\langle#1|}

\newcommand{\B}{\mathcal{B}}

\newcommand{\va}{\Vec{a}}
\newcommand{\vx}{\Vec{x}}

\newcommand{\one}{\leavevmode\hbox{\small1\normalsize\kern-.33em1}}

\def\be{\begin{equation}}
\def\ee{\end{equation}}
\def\ben{\begin{eqnarray}}
\def\een{\end{eqnarray}}
\def\eea{\end{array}}
\def\bea{\begin{array}}

\newcommand{\Tr}[1]{\mathrm{Tr}#1}
\newcommand{\bei}{\begin{itemize}}
\newcommand{\eei}{\end{itemize}}

\newcommand{\proj}[1]{\ket{#1}\!\bra{#1}}

\def\B{{\cal B}}

\newcommand{\N}{\mathbb{N}}

\renewcommand{\emph}[1]{\textbf{#1}}


\theoremstyle{plain}
\newtheorem{thm}{Theorem}

\newtheorem{fakt}{Fact}

\newtheorem{cor}[thm]{Corollary}

\newtheorem{defn}[thm]{Definition}

\theoremstyle{definition}

\theoremstyle{remark}


\titleformat*{\section}{\large\bfseries}

\begin{document}

\title{Scalable Bell inequalities for graph states of arbitrary prime local dimension and self-testing}
\author{Rafael Santos$^{1}$, Debashis Saha$^2$, Flavio Baccari$^{3}$, Remigiusz Augusiak$^1$}

\affil{
$^1$Center for Theoretical Physics, Polish Academy of Sciences, Aleja Lotnik\'ow 32/46, 02-668 Warsaw, Poland\\
$^2$ School of Physics, Indian Institute of Science Education and Research Thiruvananthapuram, Kerala, India 695551\\
$^3$Max-Planck-Institut f\"ur Quantenoptik, Hans-Kopfermann-Stra{\ss}e 1, 85748 Garching, Germany
}
\renewcommand\Affilfont{\itshape\small}


\date{}

\maketitle

\begin{abstract}
Bell nonlocality---the existence of quantum correlations that cannot be explained by classical means---is certainly one of the most striking features of quantum mechanics. Its range of applications in device-independent protocols is constantly growing. Many relevant quantum features can be inferred from violations of Bell inequalities, including entanglement detection and quantification, and state certification applicable to systems of arbitrary number of particles. A complete characterisation of nonlocal correlations for many-body systems is, however, a computationally intractable problem. Even if one restricts the analysis to specific classes of states, no general method to tailor Bell inequalities to be violated by a given state is known. In this work we provide a general construction of Bell inequalities that are maximally violated by graph states of any prime local dimension. These form a broad class of multipartite quantum states that have many applications in quantum information, including quantum error correction. We analytically determine their maximal quantum violation, a number of high relevance for device-independent applications of Bell inequalities. Finally, we show that these inequalities can be used for self-testing of multi-qutrit graph states such as the well-known four-qutrit absolutely maximally entangled state AME(4,3).
\end{abstract}

\section{Introduction}

The first Bell inequalities were introduced to show that certain predictions of quantum theory cannot be explained by classical means \cite{bell1964einstein}. In particular, correlations obtained by performing local measurements on joint entangled quantum states are able to violate Bell inequalities and hence cannot arise from a local hidden variable model (LHV). The existence of such non-local correlations is referred to as Bell non-locality or simply non-locality. 

Since then the range of applications of Bell inequalities has become much wider. In particular, they can be used
for certification of certain relevant quantum properties
in a device-independent way, that is, under minimal assumptions about the underlying quantum system. First,
violation of Bell inequalities can be used to certify the
dimension of a quantum system \cite{DimWit} or the amount of entanglement present in it \cite{Moroder2013}. Then, Bell violations are used to certify that the outcomes of quantum measurements are truly random \cite{Pironio2010}, and to estimate the amount of generated randomness \cite{Massar2012,Vertesi2014,Woodhead2020}. 

The maximum exponent of the certification power of Bell inequalities is known as self-testing. Introduced in \cite{mayers2003self}, self-testing allows for almost complete characterization of the underlying quantum system based only on the observed Bell violation. It thus appears to be one of the most accurate methods
for certification of quantum systems which makes self-testing a highly valuable asset for the rapidly developing of quantum technologies. In fact, self-testing techniques have shown to be amenable for near-term quantum devices, allowing for a proof-of-principle state certification of up to few tens of particles \cite{we2021robust,yang2022testing}. For this reason self-testing has attracted a considerable attention in recent years (see, e.g., Ref. \cite{Supic2020selftestingof}).

However, most of the above applications require Bell inequalities that exhibit carefully crafted features. In the particular case of self-testing one needs Bell inequalities whose maximal quantum values are achieved by the target quantum state and measurements that one aims to certify. Deriving Bell inequalities tailored to generic pure entangled states turns out to be in general a difficult challenge. Even more so if one looks for inequalities applicable to systems of arbitrary number of parties or arbitrary local dimension.
The standard geometric approach to derive Bell inequalities has been successful in deriving many interesting  and relevant inequalities \cite{CHSH,CGLMP,BKP,LPZB,ZB}, but
typically fails to serve a self-testing purpose, providing inequalities with unknown maximal quantum violation.

In order to construct Bell inequalities that are tailored to specific quantum states, a more promising path is to exploit the "quantum properties" of the considered system such as its symmetries. Two proposals in this direction have succeeded in designing different classes of Bell inequalities tailored to the broad family of multi-qubit graph states \cite{PhysRevLett.95.120405,BellGS2006} and the first Bell inequalities maximally violated by the maximally entangled state of any local dimension \cite{SalavrakosPRL}. 
The success of these methods was further confirmed by later applications to design the first self-testing Bell inequalities for graph states \cite{PhysRevLett.95.120405}, for genuinely entangled stabilizer subspaces \cite{Subspaces1,Subspaces2} or maximally entangled two-qutrit states \cite{kaniewski2018maximal}, as well as to derive many other classes of Bell inequalities tailored to two-qudit maximally entangled \cite{JKSA,KaniewskiMeasurements} or many-qudit Greenberger-Horne-Zeilinger states \cite{Augusiak2019}. Some of these constructions were later exploited to provide self-testing schemes for the maximally entangled  \cite{JKSA,sarkar2019self} or the GHZ states \cite{SarkarGHZ} of arbitrary local dimension.

In this work we show that similar ideas can be used to provide a general construction of Bell inequalities tailored to graph states of arbitrary prime local dimension. Graph states constitute one of the most representative classes of genuinely entangled multipartite quantum states considered in quantum information, covering the well-known Greenberger-Horne-Zeilinger, the cluster \cite{cluster} or the absolutely maximally entangled states \cite{helwig2013absolutely}, that have found numerous applications, e.g., in quantum computing \cite{Yao2012-wq,PhysRevLett.86.5188,Briegel_2009} or quantum metrology \cite{T_th_2014}. Interestingly, our construction provides the first example of Bell inequalities maximally violated by the absolutely maximally entangled states of non-qubit local dimension such as the four-qutrit AME(4,3) state \cite{helwig2013absolutely}. Moreover, it generalizes and unifies in a way the constructions of Refs. \cite{baccari2018scalable} and \cite{kaniewski2018maximal} to graph states of arbitrary prime local dimension.

The manuscript is organized as follows. In Sec. \ref{Sec:Preliminaries} we provide some background information which is necessary for further considerations; in particular we explain in detail the notions of the multipartite Bell scenario and graph states and also state the definition of self-testing we use in our work. Next, in Sec. \ref{Sec:Construction} we introduce our general construction of Bell inequalities for graph states. We then show in Sec. \ref{Sec:Self-testing} that our new Bell inequalities allow for self-testing of all graph states of local dimension three. We conclude in Sec. \ref{Sec:Conclusion} where we also provide a list of possible research directions for further studies that follow from our work.

\section{Preliminaries}
\label{Sec:Preliminaries}

\subsection{Bell scenario and Bell inequalities}
\label{BellSc}

Let us begin by introducing some notions and terminology. 
We consider a multipartite Bell scenario in which $N$ distant observers $A_i$
share a quantum state $\rho$ defined on the product Hilbert space 
\begin{equation}
    \mathcal{H}=\mathcal{H}_1\otimes\ldots\otimes\mathcal{H}_N.
\end{equation}
Each observer $A_i$ 
can perform one of $m_i$ measurements $M_{x_i}^i\equiv\{M_{a_i|x_i}^{i}\}_{a_i}$ on their share of this state, where $x_i$ stand for the measurement choices, whereas $a_i$ denote the outcomes; here we label them as $x_i=1,\ldots,m$ and $a_i=0,\ldots,d-1$, respectively. Recall that the measurement operators satisfy $M_{a_i|x_i}^i\geq 0$ for any choice of $a_i$ and $x_i$ as well as $\sum_{a_i}M_{a_i|x_i}^{i}=\mathbbm{1}$ for any $x_i$.

The observers repeat their measurements on the local parts of the state $\rho$ which creates correlations between the obtained outcomes. These are captured by a collection of probability distributions $\vec{p}\equiv \{p(\va|\vx)\}\in\mathbbm{R}^{(md)^N}$, where $p(\va|\vx)\equiv p(a_1,\ldots,a_N|x_1,\ldots,x_N)$ is the probability of obtaining the outcome $a_i$ by the observer $i$ upon performing the measurement $M_{x_i}^i$ and can be represented by the Born rule
\be \label{jp1}
p(\va|\vx) =  \Tr\left[\rho\left(M_{a_1|x_1}^1\otimes \ldots \otimes M_{a_N|x_N}^N\right)\right].
\ee 

A behaviour $\vec{p}$ is said to be local or classical if for any $\vec{a}$ and $\vec{x}$, the joint probabilities $p(\vec{a}|\vec{x})$ factorize in the following sense, 
\be \label{local}
p(\va|\vx) = \sum_{\lambda} \mu(\lambda) p_1(a_1|x_1,\lambda)\cdot \ldots \cdot p_N(a_N|x_N,\lambda),
\ee
where $\lambda$ is a random variable with a probability distribution $\mu(\lambda)$ representing the possibilities for the parties to share classical correlations
and $p_i(a_i|x_i,\lambda)$ is an arbitrary probability distribution corresponding to the observer $A_i$. On the other hand, if a behavior $\vec{p}$ does not admit the above form, we call it Bell non-local or simply non-local. In any Bell scenario correlations that are classical in the above sense form a polytope with finite number of vertices, denoted $L_{N,m,d}$. 

Any non-local distribution $\vec{p}$ can be detected to be outside the local polytope from the violation of a Bell inequality. The generic form of such inequalities is
\begin{equation}
    I:=\sum_{\va,\vx}\alpha_{\va,\vx}\,p(\va|\vx)\leq \beta_L,
\end{equation}
where $\beta_L=\max_{\vec{p}\in L_{N,m,d}}I$ is the classical bound of the inequality
and $\alpha_{\va,\vx}$ are some real coefficients defining the inequality. Any $\vec{p}$ that violates a Bell inequality is detected as non-local. 

Let us finally introduce another number characterizing a Bell inequality---the so-called quantum or Tsirelson's bound---which is defined as
\begin{equation}
    \beta_Q=\sup_{\vec{p}\in Q_{N,m,d}}I,
\end{equation}
where the maximisation runs on all quantum behaviours, i.e., all distributions $\vec{p}$ that can be obtained by performing quantum measurements on quantum states of arbitrary local dimension. The set of quantum correlations $Q_{N,m,d}$ is in general not closed \cite{Slofstra} and thus $\beta_Q$ is a supremum and not a strict maximum. Determining the quantum bound for a generic Bell inequality is an extremely difficult problem. However, interestingly, in certain cases it can still be found analytically. A way to obtain $\beta_Q$ or at least an upper bound on it is to find a sum-of-squares decomposition of a Bell operator $\mathcal{B}$ corresponding to the Bell inequality. More specifically, if for any choice of measurement operators one is able to represent the Bell operator as
\be \label{sos-m}
\B = \eta \I - \sum_k P^\dagger_k P_k,
\ee 
where $P_k$ are some operators composed of $M_{n_i|x_i}^i$, then 
$\eta$ is an upper bound on $\beta_Q$. Indeed, Eq. (\ref{sos-m}) implies that for all $|\psi\rangle$, $\langle \psi|\B|\psi\rangle \leqslant \eta$, and thus, $\beta_Q\leq \eta$. If a quantum state saturates this upper bound, then it follows from \eqref{sos-m} that $P_k|\psi\rangle = 0$ for all $k$. As we will see later such relations are particularly useful to prove a self-testing statement from the maximal violation of a Bell inequality.

%

For further convenience we also introduce an alternative description of the Bell scenario in terms of generalized expectation values (see, e.g., Ref. \cite{Augusiak2019}). These are in general complex numbers defined through the $N$-dimensional discrete Fourier transform of $\{p(\va|\vx)\}$, 
\be \label{jp}
\langle A_{n_1|x_1}^1 \ldots A_{n_N|x_N}^N \rangle=  \sum_{\va} \w^{\va \cdot \Vec{n}} p(\va|\vx) ,
\ee 
where $\omega=\mathrm{exp}(2\pi \mathrm{i}/d)$ is the $d$th root of unity, 
$\Vec{a} := (a_1,\ldots,a_N) \in \{0,\ldots,d-1\}^N$ and $\Vec{n} := (n_1,\ldots,n_N) \in \{0,\ldots,d-1\}^N$, and $\va \cdot \Vec{n}=\sum_ia_in_i$. The inverse transformation gives
\be \label{jp}
p(\va|\vx)  = \frac{1}{d^N} \sum_{\Vec{n}} \w^{- \va \cdot \Vec{n}} \langle A_{n_1|x_1}^1 \ldots A_{n_N|x_N}^N \rangle.
\ee 
Combining Eqs. (\ref{jp1}) and (\ref{jp}) one finds that if the correlations $\vec{p}$ are quantum, that is, originate from performing local measurements on composite quantum states, the complex expectation values can be represented as
\begin{equation}
\langle A_{n_1|x_1}^1 \ldots A_{n_N|x_N}^N \rangle=\Tr\left[\rho\left(A_{n_1|x_1}^1 \otimes\ldots\otimes A_{n_N|x_N}^N\right)\right],    
\end{equation}
where $A_{n_i|x_i}^i$ are simply Fourier transforms of the 
measurement operators $M_{a_i|x_i}^i$ given by
\be \label{ft}
A_{n_i|x_i}^i = \sum_{a_i=0}^{d-1} \w^{n_i a_i} M_{a_i|x_i}^i.
\ee 
Clearly, due to the fact that the Fourier transform is invertible, for a given $x_i$ and $i$, the $d$ operators $A_{n_i|x_i}^i$ with $n_i=0,\ldots,d-1$
uniquely represent the corresponding measurement $M_{x_i}^i$. 

Let us now discuss a few properties of the Fourier-transformed measurement operators that will prove very useful later. For clarity of the presentation we consider a single quantum measurement $M=\{M_a\}$ and the corresponding $A_n$ operators obtained via Eq. (\ref{ft}). First, one easily finds that $A_0=\mathbbm{1}$. Second, 
\begin{equation}
    A_{d-n}=A_{-n}=A_{n}^{\dagger}
\end{equation}
which is a consequence of the fact that $\omega^{d-n}=\omega^{-n}=(\omega^{n})^*$ holds true for any $n\in\{0,\ldots,d-1\}$. Third, $A_n^{\dagger}A_n\leq \mathbbm{1}$ for any $n=0,\ldots,d-1$ (for a proof see Ref. \cite{kaniewski2018maximal}). 

Let us finally mention that if $M$ is projective then all $A_n$ are unitary and their eigenvalues are simply powers of $\w$; equivalently $A_n^d=\mathbbm{1}$. It is also not difficult to see that in such a case, $A_{n}$ are operator powers of $A_{1}$, that is, $A_{n}=A_{1}^n$. In this way we recover a known fact that a projective measurement can be represented by a single observable. We exploit these properties later in our construction of Bell inequalities as well as in deriving the self-testing statement. In fact, in what follows we denote the observables measured by the party $i$ by $A_{i,x_i}$.

\subsection{Self-testing}

Here we introduce the definition of $N$-partite self-testing that we adopt in this work. Let us consider again the Bell scenario described above, assuming, however, that the shared state $\rho$, the Hilbert space it acts on as well as the local measurements are all unknown. The aim of the parties is to deduce their form from the observed correlations $p(\vec{a}|\vec{x})$. Since the dimension of the joint Hilbert space $\mathcal{H}$ is now unconstrained (although finite) we can simplify the latter problem by assuming that the shared state is pure, i.e., $\rho=\proj{\psi}$ for some $\ket{\psi}\in\mathcal{H}$, and the measurements are projective, in which case they are represented by unitary observables $A_{i,x_i}$ acting on $\mathcal{H}_i$.

Consider then a target state $\ket{\hat{\psi}}\in (\mathbb{C}^d)^{\otimes N}$ and the corresponding measurements $\hat{A}_{i,x_i}$, giving rise to the same behaviour $\{p(\vec{a}|\vec{x})\}$. We say that the observed correlations self-test the given state and measurements if the following definition applies.

\begin{defn}\label{definition} 
If from the observed correlations $\{p(\vec{a}|\vec{x})\}$ one can identify a qudit in each local Hilbert space in the sense that $\mathcal{H}_i=\mathbbm{C}^d\otimes \mathcal{H}_i'$ for some auxiliary Hilbert space $\mathcal{H}_i'$, 
and also deduce the existence of local unitary operations 
$U_i:\mathcal{H}_i\to\mathbbm{C}^d\otimes \mathcal{H}_i'$ such that 
\begin{equation}
    (U_1\otimes\ldots\otimes U_N)\ket{\psi}=\ket{\hat{\psi}}\otimes \ket{\mathrm{aux}}
\end{equation}
for some $\ket{\mathrm{aux}}\in\mathcal{H}_i'\otimes\ldots\otimes\mathcal{H}_{N}'$, 
and, moreover,
\begin{equation}
    U_i\,A_{i,x_i}\, U_i^{\dagger}=\hat{A}_{i,x_i} \otimes \I_{i},
\end{equation}
where $\I_i$ is the identity acting on $\mathcal{H}'_{i}$, 
then we say that the reference quantum state $\ket{\hat{\psi}}$
and measurements $\hat{A}_{i,x_i}$ have been self-tested in the experiment.
\end{defn}

Importantly, only non-local correlations can give rise to a valid self-testing statement. 
Moreover, since it is based only on the observed correlations $\{p(\vec{a}|\vec{x})\}$, self-testing can characterize the state and the measurements only up to certain equivalences. In particular, the statement above includes all possible operations that keep the correlations $\{p(\vec{a}|\vec{x})\}$ unchanged, such as: (i) the addition of an auxiliary state $\ket{\mathrm{aux}}$ on which the measurements act trivially and (ii) the rotation by an arbitrary local unitary operations.
It is also not difficult to check that 
$\{p(\vec{a}|\vec{x})\}$ does not change if one applies the transposition map to the quantum state as well as all the observables, and thus sometimes one needs to take into account this extra degree of freedom, which leads to a slightly weaker definition of self-testing \cite{kaniewski2018maximal,KaniewskiWeak}. In the present case, since the graph states are all real the transposition does not pose any problem as far as the state self-testing is concerned, yet it does for the measurements.

\subsection{Graph states}\label{sec: graph states}

Let us finally recall the definition of multipartite graph states of prime local dimension \cite{PhysRevA.65.012308,PhysRevA.71.042315,hein2006entanglement}.
Consider a graph $\mathcal{G}=(\mathcal{V},\mathcal{E},\mathcal{R},d)$, where $d$ is any prime number such that $d\geq 2$, $\mathcal{V} := \{1,\ldots,N\}$ is the set of vertices of the graph, $\mathcal{E}$ is the set of edges connecting vertices, and $\mathcal{R}:= \{r_{i,j}\}$ is a set of natural numbers from $\{0,\ldots,d-1\}$ specifying the number of edges connecting vertices $i,j \in \mathcal{V}$; in particular, $r_{i,j}=0$ means there is no edge between $i$ and $j$. We additionally assume that $r_{i,i}=0$ for all $i$, meaning that the graph has no loops as well as that the graph $\mathcal{G}$ is connected, meaning that it does not have any isolated vertices. 
By $\mathcal{N}_i$ we denote the neighbourhood of the vertex $i$ which consists
of all elements of $\mathcal{V}$ that are connected to $i$.

Assume then that each vertex $i\in\mathcal{V}$ of the graph corresponds to a single quantum system held by the party $A_i$ and let us associate to it the following $N$-qudit operator 
\begin{equation} \label{Gi-m}
    G_{i} = X_{i}  \otimes \bigotimes_{j\in \mathcal{N}_i} Z_{j}^{r_{ij}}\qquad (i=1,\ldots,N)
\end{equation} 
with $X$ and $Z$ being the generalizations of the qubit Pauli matrices to $d$-dimensional Hilbert spaces defined via the following relations 
\begin{equation}
    Z\ket{i}=\omega^i\ket{i},\qquad X\ket{i}=\ket{i+1}\qquad (i=0,\ldots,d-1),
\end{equation}
where the addition is modulo $d$. Due to the fact that $XZ=\omega^{-1} ZX$, it is not difficult to see that 
the operators $G_i$ mutually commute. It then follows that there is a unique
pure state $\ket{G}\in(\mathbbm{C}^d)^{\otimes N}$, called graph state, which is a common eigenstate of all $G_i$ corresponding to the eigenvalue one, i.e., 
\begin{equation}
    G_i\ket{G}=\ket{G}\qquad (i=1,\ldots,N).
\end{equation}
Given the above property, the $G_i$ are usually referred to as stabilizing operators. Notice also that 
in the particular case of $d=2$ this construction naturally reproduces 
the $N$-qubit graph states \cite{hein2006entanglement}, where vertices 
can only be connected by single edges.

Let us illustrate the above construction with a couple of examples. 

\paragraph{Example 1: Maximally entangled two-qudit state.} Let us start with the simplest possible graph, consisting of two vertices connected by an edge (cf. Fig. \ref{fig:ame}(a)).
The corresponding generators are given by
\begin{equation}
G_1=X\otimes Z,\qquad    G_2=Z\otimes X,
\end{equation}
and stabilize a single state in $\mathbbm{C}^d\otimes \mathbbm{C}^d$ which is equivalent up to local unitary operations to the maximally entangled state of two qudits,  
\begin{equation}\label{maxent}
    |\psi_d^+\rangle=\frac{1}{\sqrt{d}}\sum_{i=0}^{d-1}|ii\rangle
\end{equation}
in which both local Schmidt bases are the computational one. 
In fact, the above state is stabilized by another pair of operators, namely,
\begin{equation}\label{maxentstab}
    G'_1=X\otimes X,\qquad G'_2=Z\otimes Z^{\dagger},
\end{equation}
which are obtained from $G_i$ by an application of 
the Fourier matrix to the second site.

\paragraph{Example 2: GHZ state.} The above two-vertex graph naturally generalizes to a star graph consisting of $N$ vertices (cf. Fig. \ref{fig:ame}(b)). The associated generators are of the form 
\begin{equation}
    G_1=X_1Z_2\ldots Z_N
\end{equation}
and
\begin{equation}
    G_i=Z_1X_i\qquad (i=2,\ldots,N),
\end{equation}
%
and stabilize an $N$-qudit state which is equivalent under local unitary operations to the well-known Greenberger-Horne-Zeilinger (GHZ) state 
\begin{equation}
    \ket{\mathrm{GHZ}_{N,d}}=\frac{1}{\sqrt{d}}\sum_{i=0}^{d-1}\ket{i}^{\otimes N}.
\end{equation}

\begin{figure}[http]
    \centering
    \includegraphics[width=\columnwidth]{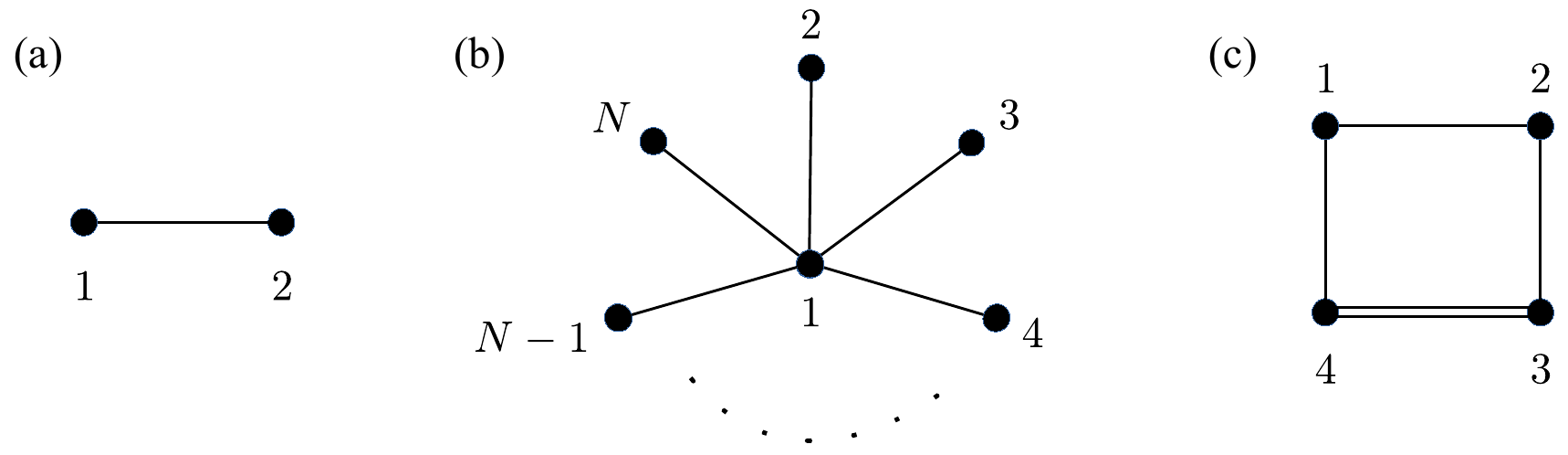}
    \caption{Three examples of graphs defining: (a) the maximally entangled state of two qudits, (b) the $N$-qudit GHZ state, (c) the four-qutrit absolutely maximally entangled state AME(4,3).}
    \label{fig:ame}
\end{figure}

\paragraph{Example 3: AME(4,3).} The third and the last example is concerned with the four-qutrit absolutely maximally entangled state\footnote{A multipartite state is termed absolutely maximally entangled if any of its $\lfloor N/2\rfloor$-partite subsystems is in the maximally mixed state \cite{PhysRevA.77.060304}.}, named AME(4,3) \cite{helwig2013absolutely}. The graph defining it is presented in Fig. \ref{fig:ame}(c).
The stabilizing operators corresponding to this graph 
read
\begin{equation}\label{Stab:AME}
    G_1=X_1Z_2Z_4,\qquad G_2=Z_1X_2Z_3,\qquad G_3=Z_2X_3Z_4^2,\qquad
    G_4=Z_1Z_3^2X_4.
\end{equation}
They stabilize a three-qutrit maximally entangled state
AME(4,3) which is equivalent under local unitary operations and relabelling of the subsystems to (see, e.g., Ref. \cite{PhysRevA.100.022342}),
\begin{equation}
    \ket{\mathrm{AME}(4,3)}=\frac{1}{3}\sum_{i,j=0}^2
    \ket{i}\ket{j}\ket{i+j}\ket{i+2j},
\end{equation}
where the addition is modulo three.

\section{Construction of Bell inequalities for arbitrary graph states of prime local dimension}
\label{Sec:Construction}

Here we present our first main result: a general construction of Bell inequalities whose maximal quantum value is achieved by the $N$-qudit graph states of arbitrary prime local dimension and quantum observables corresponding to mutually unbiased bases at every site. Our construction is inspired by the recent approach to construct CHSH-like Bell inequalities for the $N$-qubit graph states presented in Ref. \cite{baccari2018scalable} and by another construction of Bell inequalities maximally violated by the maximally entangled two-qudit state introduced in Ref. \cite{kaniewski2018maximal}. 

First, in Sec. \ref{sec:qubit graph states} we recall the general class of Bell inequalities maximally violated by $N$-qubit graph states of Ref. \cite{baccari2018scalable}. Then, in Sec. \ref{sec:replacement} we introduce the main building block to generalise this construction to arbitrary prime dimension. We illustrate the Bell inequality construction with some simple examples in Sec. \ref{sec:examples} and then move to introduce the general form of the inequality valid of any $N$-qudit graph state of prime dimension in Sec. \ref{sec: general construction}.

\subsection{Multiqubit graph states}\label{sec:qubit graph states}

Let us assume that $d=2$ and let us consider a graph $\mathcal{G}$. Without any loss of generality we can assume that a vertex with the largest 
neighbourhood is the first one, that is, $N_1=\max_{i=1,\ldots,N}|\mathcal{N}_i|$. If there are many vertices with the maximal neighbourhood in $\mathcal{G}$, we are free to choose any of them as the first one.

To every generator $G_i$ we associate an expectation value in which the $X$ and $Z$ Pauli matrices are replaced by quantum observables or their combinations using the following rule. At the first qubit we make the following assignment,
\begin{equation}\label{replacement}
X\to \frac{1}{\sqrt{2}}(A_{1,0}+A_{1,1}),\qquad 
Z\to \frac{1}{\sqrt{2}}(A_{0,1}-A_{1,1}),
\end{equation}
whereas the Pauli matrices at the remaining sites are directly replaced by observables, that is,
\begin{equation}
    X\to A_{i,0},\qquad Z\to A_{i,1}
\end{equation}
with $i=2,\ldots,N$. Recall that the first index enumerates the parties, while the second one measurement choices. This procedure gives us $N$ expectation values which
after being combined altogether lead us to the following Bell inequality 
\cite{baccari2018scalable}:
\begin{eqnarray}\label{QubitBellIneq}
    I_G&:=&\frac{N_1}{\sqrt{2}}\left\langle(A_{1,0}+A_{1,1})\prod_{i\in \mathcal{N}(1)}A_{i,1}\right\rangle+\frac{1}{\sqrt{2}}\sum_{i\in \mathcal{N}(1)}\left\langle(A_{1,0}-A_{1,1})A_{i,0}\prod_{j\in \mathcal{N}(1)\setminus\{1\}}A_{j,1}\right\rangle
    \nonumber\\
    &&+\sum_{i\notin \mathcal{N}(1)\cup\{1\}}\left\langle A_{i,0}\prod_{j\in \mathcal{N}(i)}A_{j,1}\right\rangle\leq \beta_{C}^G,
\end{eqnarray}
where the classical bound can directly be determined for any graph $G$ and is given by $\beta_C^G=N+(\sqrt{2}-1)N_{1}-1$. More importantly, the maximal quantum value can also be analytically 
computed for any graph and amounts to $\beta_Q^G=N+N_{1}-1$.
This value is achieved by the graph state $\ket{G}\in (\mathbbm{C}^2)^{\otimes N}$ corresponding to the graph $\mathcal{G}$ and the following observables:
\begin{equation}\label{QubitComb}
    A_{1,0}=\frac{1}{\sqrt{2}}(X+Z),\qquad 
    A_{1,1}=\frac{1}{\sqrt{2}}(X-Z)
\end{equation}
for the first observer and $A_{i,0}=X$ and $A_{i,1}=Z$
for the remaining observers $i=2,\ldots,N$.

It is worth stressing here that one of the key observations making the construction of Ref. \cite{baccari2018scalable} work is that for any graph there exists a choice of observables at any site, given by the above formulas, turning the quantum operators appearing in 
the expectation values of (\ref{QubitBellIneq})
into the stabilizing operators $G_i$; in particular, 
it is a well-known fact that combinations of the Pauli matrices in Eq. (\ref{QubitComb}) are proper quantum observables with eigenvalues $\pm 1$.

\subsection{Replacement rule for operators of arbitrary prime dimension}\label{sec:replacement}

We now move on to introduce the main ingredient needed to  generalise the above construction to graph states of prime local dimension $d \geq 3$.

A naive approach to constructing Bell inequalities for 
graph states of higher local dimensions would be to directly 
follow the $d = 2$ strategy. That is, at a chosen site the $X$ and $Z$ operators
are replaced by combinations of general $d$-outcome observables $A_{0}$ and $A_{1}$.
However, this simple approach fails to work beyond $d = 3$ because for any prime $d\geq 3$ it is 
impossible find nonzero complex numbers $\alpha,\beta\in\mathbbm{C}$ for which 
%
%
\begin{equation}
    O=\alpha X+\beta Z,
\end{equation}
is a valid quantum observable; in fact, for no complex numbers the above combinations can be unitary, unless $d=2$ (cf. Fact \ref{fact:unitary} in Appendix \ref{App0}).
This makes the transformation (\ref{QubitComb}) irreversible.
%
%
Phrasing differently, there are no unitary observables $A_0$ and $A_1$ such that $X=\alpha A_0+\beta A_1$ and $Z=\delta A_0+\gamma A_1$ for some complex numbers $\alpha,\beta,\gamma,\delta\in\mathbbm{C}$. 

Nevertheless, there exist other sets of $d$-outcome quantum observables which can be linearly combined to form quantum observables, and thus are convenient for our purposes. One such choice is the following set of $d$ unitary matrices
\begin{equation}\label{Ok}
   O_{k}:= XZ^k\qquad (k=0,\ldots,d-1).
\end{equation}
It is not difficult to check that $O_{k}^d=\mathbbm{1}_d$ for any $k=0,\ldots,d-1$ and prime $d$, meaning that the eigenvalues of each of these unitary matrices belong to the set $\{1,\omega,\ldots,\omega^{d-1}\}$, and thus 
are proper $d$-outcome observables in our formalism. It is also worth mentioning that for any prime $d\geq 2$ their eigenvectors together with the standard basis in $\mathbbm{C}^d$ form $d+1$ mutually unbiased bases.

Let us now assume that $d$ is a prime number greater than two ($d\geq 3$) 
and consider the following linear combinations of $O_{k}$ and their powers, 
\begin{eqnarray} \label{At-m}
\overline{O}_{x}^{(n)} = \frac{\lambda_n}{\sqrt{d}} \sum^{d-1}_{k=0} \w^{nxk} 
\omega^{nk(k+1)} O_{k}^n,
\end{eqnarray}
where $x=0,1,\ldots,d-1$ and $\lambda_n$ are complex coefficients
defined as \cite{kaniewski2018maximal}:
\begin{equation}\label{lambdan}
    \lambda_n=\left[\varepsilon_d\left(\frac{n}{d}\right)\right]^{-1}\omega^{-g(n,d)/48},
\end{equation}
where 
\begin{equation}\label{lambdan2}
    \varepsilon_d:=
    \left\{
    \begin{array}{ll}
    1, &\quad \mathrm{if}\quad d\equiv1\mod4,\\[1ex]
    \mathbbm{i}, &\quad \mathrm{if}\quad d\equiv3\mod4.
    \end{array}
    \right.
\end{equation}
$\left(\frac{n}{d}\right)$ is the Legendre symbol\footnote{Recall that the Legendre symbol $\left(\frac{n}{d}\right)$  equals $+1$ if $n$ is a
quadratic residue modulo $d$ and $-1$ otherwise}, and, finally, the 
coefficients $g(n,d)$ are given by 
\be\label{lambdan3}
g(n,d) = \left \{ 
\begin{array}{lll}
n[n^2-d(d+6)+3] & \mbox{if} & n\equiv 0 \mbox{ mod 2 and } n+d+1/2 \equiv 0 \mbox{ mod 2,} \\ [1ex]
n[n^2-d(d-6)+3] & \mbox{if} & n\equiv 0 \mbox{ mod 2 and } n+d+1/2 \equiv 1 \mbox{ mod 2,}  \\ [1ex]
n(n^2+3)+2d^2(-5n+3) & \mbox{if} & n\equiv 1 \mbox{ mod 4,} \\[1ex] 
n(n^2+3)+2d^2(n+3) & \mbox{if} & n\equiv 3 \mbox{ mod 4.} 
\end{array} 
\right.
\ee
%
Importantly, it was proven in Ref. \cite{kaniewski2018maximal} (see Appendix D therein) that $\overline{O}_{x}^{(n)}$ are unitary and satisfy 
\begin{equation}\label{O_n^d}
    \left[\overline{O}_{x}^{(n)}\right]^d=\mathbbm{1}_d
\end{equation}
for any $x=0,\ldots,d-1$ and $n=1,\ldots,d-1$. What is more,
$\overline{O}_{x}^{(n)}$ turns out to be the $n$th power of 
$\overline{O}_{x}$, that is,
$\overline{O}_{x}^{(n)}=[\overline{O}_{x}]^{n}$.
All this means that for any $x$ the set $\{\overline{O}_{x}^{(n)}\}_{n=0,\ldots,d-1}$ represents a legitimate $d$-outcome projective quantum measurement. 
Let us finally mention that the linear transformation (\ref{At-m}) 
can be inverted, giving
\be \label{Ak-m}
O^n_{l} = \frac{\omega^{-nl(l+1)}}{\sqrt{d}\,\lambda_n} \sum^{d-1}_{x=0} \w^{-nxl}  \,
\overline{O}_{x}^{(n)}.
\ee

The fact that both $O_k$ and $\overline{O}_{k}$ are unitary quantum observables that are related by a linear reversible transformation given by Eqs. (\ref{At-m}) and (\ref{Ak-m}) is the key ingredient in our construction. That is, we can proceed in analogy to $d=2$ case, where we used the replacement defined in Eq. (\ref{replacement}) to define the Bell inequality and we could later reverse it by a suitable choice of quantum observables (\ref{QubitComb}) to obtain the maximal quantum violation with a graph state. 

The replacement rule we use for the case of arbitrary prime dimension becomes:
\begin{equation}\label{assign1}
    \left(XZ^k\right)^n\,\to\, \widetilde{A}^{(n)}_k:= \frac{\omega^{-nk(k+1)}}{\sqrt{d}\,\lambda_n} \sum^{d-1}_{t=0} \w^{-ntk}  A_{t}^n,
\end{equation}
where $A_{t}$ with $t=0,\ldots,d-1$ are unitary observables.
Notice that since we deal now
with $d$-outcome quantum measurements we need to also take into account the powers 
$n$ of the corresponding observables. In fact, these under the Fourier transform represent the outcomes of projective measurements.

Crucially, this transformation can be inverted in the sense that there exist a choice of observables $A_{i,t}$,
\begin{equation}\label{AObs}
    A_{t}^n=\frac{\lambda_n}{\sqrt{d}}\sum_{k=0}^{d-1}\omega^{ntk}
    \omega^{nk(k+1)}\left(XZ^k\right)^n.
\end{equation}
for which $\widetilde{A}_k^{(n)}$ in Eq. (\ref{assign1}) can be brought back to $XZ^k$.

These new operators $\widetilde{A}^{(n)}_k$ satisfy the following relations (see Fact \ref{fact:properties} in Appendix \ref{App0} for a proof): 
\begin{equation}\label{prop1}
\left(\widetilde{A}^{(n)}_k\right)^\dagger = \widetilde{A}^{(d-n)}_k = \widetilde{A}^{(-n)}_k    
\end{equation}
for any pair $n,k=0,\ldots,d-1$,
and
\begin{equation}\label{prop2}
\sum^{d-1}_{k=0}  \widetilde{A}^{(d-n)}_k \widetilde{A}^{(n)}_k = d \I
\end{equation}
for any $n=0,\ldots,d-1$.

\subsection{Examples}\label{sec:examples}

Before presenting our construction in full generality, let us first illustrate how to use the qudit replacement rule to obtain valid Bell inequalities tailored to graph states by means of two examples.

\paragraph{Example 1: AME(4,3).} As mentioned in Sec. \ref{sec: graph states}, the 
the four-qutrit absolutely maximally entangled state is a graph state corresponding to the graph presented on Fig. \ref{fig:ame}.
The stabilizing operators 
defining this state are given in Eq. (\ref{Stab:AME}).
We recall them here
\begin{equation}
    G_1=X_1Z_2Z_4,\qquad G_2=Z_1X_2Z_3,\qquad G_3=Z_2X_3Z_4^2,\qquad
    G_4=Z_1Z_3^2X_4.
\end{equation}

Since the neighbourhood of all vertices of this graph is of size two, each vertex is equally good  to implement the transformation (\ref{assign1}). For simplicity 
we choose it to be the first site. Moreover, as in the previous example, we denote the 
observables measured by the four parties as $A_x$, $B_y$, etc.

Now, to create the set of matrices $XZ^k$ [necessary for the transformation (\ref{assign1})] at the first site we consider 
the stabilizing operators $G_1$, $G_1G_2$, and $G_1G_2^2$. These are, however, insufficient to uniquely define $\ket{\mathrm{AME}(4,3)}$ as they do not include $G_3$ and $G_4$. Since $G_3$ has the identity at the first position we can include it as it is, whereas we need to take a product of $G_4$ with $G_1$ to create $XZ$ at the first site. As a result, the final set of stabilizing operators which we use to construct a Bell inequality for $\ket{\mathrm{AME}(4,3)}$ consists of 
\begin{eqnarray}\label{AMEStab}
    G_1&\!\!\!=\!\!\!&X\otimes Z\otimes\mathbbm{1}\otimes Z,\nonumber\\
    G_1G_2&\!\!\!=\!\!\!&XZ\otimes ZX\otimes Z \otimes Z,\nonumber\\
    G_1G_2^2&\!\!\!=\!\!\!&XZ^2\otimes ZX^2\otimes Z^2\otimes Z,\nonumber\\
    G_3&\!\!\!=\!\!\!&\mathbbm{1}\otimes Z\otimes X\otimes Z^2\nonumber\\
    G_1G_4&\!\!\!=\!\!\!&XZ\otimes Z\otimes Z^2\otimes ZX.
\end{eqnarray}

Now, to each of these stabilizing operators we associate 
an expectation value in which particular matrices are 
replaced by quantum observables or their combinations. 
For pedagogical purposes, let us do it site by site. 
As already mentioned, at the first site we use Eq. (\ref{assign1})
which for $d=3$ gives
\begin{eqnarray}\label{assignd3}
    X&\to& \widetilde{A}_0:=\frac{1}{\sqrt{3}\,\lambda_1}\left(A_0+A_1+A_2\right),\nonumber\\
    XZ&\to& \widetilde{A}_1:=\frac{1}{\sqrt{3}\,\lambda_1\omega}\left(A_0+\omega^{-1} A_1+\omega^{-2}A_2\right),\nonumber\\
    XZ^2&\to& \widetilde{A}_2:=\frac{1}{\sqrt{3}\,\lambda_1}\left(A_0+\omega^{-2} A_1+\omega^{-1} A_2\right),
\end{eqnarray}
where $\lambda_1=-\mathrm{i}\omega^{2/3}=\omega^{1/12}=\mathrm{exp}(\pi\mathbbm{i}/18)$ and $\lambda_2=\lambda_1^*$ [cf. Eqs. (\ref{lambdan}), (\ref{lambdan2}) and (\ref{lambdan3})] and we denoted for simplicity $\widetilde{A}_i\equiv\widetilde{A}_i^{(1)}$. We dropped the subscript $n$ appearing in the transformation (\ref{assign1}) because for $n=2$ one has
$(XZ^k)^2=(XZ^k)^{\dagger}$ for $k=0,1,2$ and 
$\widetilde{A}_i^{(2)}=\widetilde{A}_i^{\dagger}$ [cf. Eq. (\ref{prop1})]; nevertheless, we need to take into account the case $n=2$ when constructing the Bell inequality.

We then note that at the second site we also have three independent 
unitary observables [note that $(ZX)^{3}=(ZX^2)^3=\mathbbm{1}$] and therefore we can directly substitute
\begin{equation}\label{assign5}
    Z\to B_0,\qquad ZX\to B_1,\qquad ZX^2\to B_2.
\end{equation}
At the third site we have $Z$, $Z^2$ which represent a single
measurement (cf. Sec. \ref{BellSc}), and $X$ which is independent 
of the other two. We thus substitute
%
 $   Z^k\to C_0^k$ with $k=1,2$ and $X\to C_1$.
%
Analogously, for the fourth party we have 
%
 $   Z\to D_0$ and $ZX\to D_1.$
%

Taking all the above substitutions into account we arrive at the following assignments
\begin{eqnarray}\label{transformation}
    G_1&\to&\langle\widetilde{A}^{(1)}_0B_0D_0\rangle, \qquad
    G_1G_2\to\langle\widetilde{A}_1^{(1)}B_1C_0D_0\rangle,\qquad
    G_1G_2^2\to\langle\widetilde{A}_2^{(1)}B_2C_0^2D_0\rangle,
\end{eqnarray}
\begin{equation}
    G_1G_4\to\langle\widetilde{A}_1^{(1)}B_0C_0^2D_1\rangle,
\end{equation}
%
%
and for $G_3$:
\begin{equation}\label{transformation2}
      G_3\to \langle B_0 C_1D_0\rangle.
\end{equation}
Notice that the expectation values corresponding to $n=2$ in the assignment 
(\ref{assign1}) are simply complex conjugations of the above ones.
By adding all the obtained expectation values, 
we finally obtain a Bell inequality of the form
\begin{eqnarray}\label{BellAME}
    I_{\mathrm{AME}}&:=&\frac{1}{\sqrt{3}\,\lambda_1}\left[\langle(A_0+A_1+A_2)B_0D_0\rangle+\langle(A_0+\omega^2 A_1+\omega A_2)B_2C_0^2D_0\rangle\right]\nonumber\\
    &&+\frac{1}{2\sqrt{3}\,\lambda_1\omega}\left[\langle(A_0+\omega A_1+\omega^2A_2)B_1C_0D_0\rangle+\langle(A_0+\omega A_1+\omega^2 A_2)B_0C_0^2D_1\rangle\right]\nonumber\\
    &&+\langle B_0 C_1D_0\rangle+c.c.\leq\beta_{\mathrm{AME}}^C,
\end{eqnarray}
where c.c. stands for the complex conjugation of all five terms and represents
the expectation values obtained for the case $n=2$ of the assignment (\ref{assign1}); 
in particular, it makes the Bell expression real. Moreover, the second line comes with $1/2$ coefficient for reasons that will become clear later. The classical value in this case is 
\begin{equation}\label{AMECL}
 \beta_{\mathrm{AME}}^C=7.63816.
\end{equation}

Let us prove that the maximal quantum violation of this inequality 
is $\beta_{\mathrm{AME}}^Q=8$. First, denoting by $\mathcal{B}_{\mathrm{AME}}$ a Bell operator
constructed from $I_{\mathrm{AME}}$, we can write the following sum-of-squares decomposition
\begin{eqnarray}
    8\mathbbm{1}-\mathcal{B}_{\mathrm{AME}}&\!\!\!=\!\!\!&(\mathbbm{1}-\widetilde{A}_0B_0D_0)^{\dagger}
    (\mathbbm{1}-\widetilde{A}_0B_0D_0)+(\mathbbm{1}-\widetilde{A}_2B_2C_0^2D_0)^{\dagger}(\mathbbm{1}-\widetilde{A}_2B_2C_0^2D_0)\nonumber\\
    &&+\frac{1}{2}(\mathbbm{1}-\widetilde{A}_1B_1C_0D_0)^{\dagger}(\mathbbm{1}-\widetilde{A}_1B_1C_0D_0)+\frac{1}{2}(\mathbbm{1}-\widetilde{A}_1B_0C_0^2D_1)^{\dagger}(\mathbbm{1}-\widetilde{A}_1B_0C_0^2D_1)\nonumber\\
    &&+(\mathbbm{1}-B_0C_1D_0)^{\dagger}(\mathbbm{1}-B_0C_1D_0),
\end{eqnarray}
where $A_x$, $B_y$, etc. are arbitrary three-outcome unitary observables. To prove that this decomposition holds true one simply expands its right-hand side and uses the property [cf. Eq. (\ref{prop2})], which in the particular case $d=3$ reads,
\begin{equation}
    \widetilde{A}_0^{\dagger}\widetilde{A}_0+\widetilde{A}_1^{\dagger}\widetilde{A}_1+\widetilde{A}_2^{\dagger}\widetilde{A}_2=3\mathbbm{1}.
\end{equation}
Now it becomes clear why the second line of $I_{\mathrm{AME}}$ comes with $1/2$.

From this decomposition we immediately conclude that $8\mathbbm{1}-\mathcal{B}_{\mathrm{AME}}\geq 0$ for any choice of the local observables, which implies that also for any state $\ket{\psi}$, $\langle\psi|\mathcal{B}_{\mathrm{AME}}|\psi\rangle\leq 8$. 
To show that this bound is tight it suffices to provide a quantum realisation achieving it. Such a realisation can be constructed by inverting the transformation in Eqs. (\ref{assignd3}) and (\ref{assign5}), that is, by taking 
\begin{equation}
    A_x=\frac{\lambda_1}{\sqrt{3}}\sum_{k=0}^2\omega^{xk}\omega^{k(k+1)}O_{k}\qquad (i=0,1,2),
\end{equation}
and $B_y=ZX^y$ with $y=0,1,2$, $C_0=Z$ and $C_1=X$, and $D_w=ZX^w$ with $w=0,1$,
%
%
we can bring the Bell operator $\mathcal{B}_{\mathrm{AME}}$ to 
\begin{equation}
    \mathcal{B}_{\mathrm{AME}}=G_1+G_1G_2^2+\frac{1}{2}(G_1G_2+G_1G_4)+G_3+h.c.,
\end{equation}
which is simply a sum of the stabilizing operators of $|\mathrm{AME(4,3)}\rangle$.
As a result, the latter achieves the maximal quantum value of the Bell inequality 
(\ref{BellAME}).


\paragraph{Example 2: Two-qudit maximally entangled state.} Let us then consider the case of arbitrary prime $d$ and construct Bell inequalities for the simplest graph state which is the maximally entangled state (\ref{maxent}) stabilized by the two generators given in Eq. (\ref{maxentstab}).
Since we are now concerned with the bipartite scenario we can denote the observables measured by the parties by $A_x$ and $B_y$; the numbers of observables on both sites will be specified later. As already explained, to construct Bell inequalities we cannot simply use the replacement (\ref{replacement}), we rather need to employ the one in Eq. (\ref{assign1}). Let us moreover assume that we implement this transformation at Alice's site. 

To be able to apply the above assignments, we need to consider
a larger set of stabilizing operators which apart from $X$ and $Z^k$ 
operators contain also $(XZ^k)^n$ with $k=0,\ldots,d-1$ and $n=1,\ldots,d-1$. 
To construct such a set one can for instance take the following 
products of $G_i'$ given in Eq. \eqref{maxentstab}:
\begin{equation}
G_1'(G_2')^k=XZ^k\otimes XZ^{-k}    \qquad (k=0,1,\ldots,d-1).
\end{equation}
%
%
%
However, to take into account all the outcomes of the measurements performed by both parties we need to also include the powers of the above stabilizing operators [cf. Sec. \ref{BellSc}] which leads us to the following $d(d-1)$ stabilizing operators of $\ket{\psi_{d}^+}$:
\begin{equation}
\mathcal{G}^n_k:=\left[G_1'(G_2')^k\right]^n=\left(XZ^k\right)^n\otimes \left(XZ^{-k}\right)^n \qquad (k=0,\ldots,d-1;\,n=1,\ldots,d-1).
\end{equation}
We can now construct Bell inequalities maximally violated by the two-qudit maximally entangled states. Precisely, to each of the stabilizing operators $\mathcal{G}_k^n$ we associate an expectation value in which the particular matrices appearing at the first site are replaced by the combinations (\ref{assign1}) of the observables $A_x$, 
\begin{equation}
    \left(XZ^k\right)^n\,\to \, \widetilde{A}_k^{(n)},
\end{equation}
whereas at the second
site we substitute directly
\begin{equation}
    \left(XZ^{-k}\right)^{n}\,\to\,B_k^n.
\end{equation}
In other words, we associate
\begin{equation}
    \mathcal{G}_k^n\,\to\,\left\langle \widetilde{A}_k^{(n)}B_k^n\right\rangle
\end{equation}
with $k=0,1,2$ and $n=1,2$.
%
%
%

Adding then all the obtained expectation values and exploiting the fact that $\lambda_n^{-1}=\lambda_n^*$, we finally 
arrive at Bell inequalities derived previously in Ref. \cite{kaniewski2018maximal}: 
\begin{eqnarray}\label{Imax}
    I_{\max}&\!\!\!:=\!\!\!&\sum_{n=1}^{d-1}\sum_{k=0}^{d-1}\left\langle\widetilde{A}^{(n)}_kB_k^n\right\rangle
    \nonumber\\
  &\!\!\!=\!\!\!&  \frac{1}{\sqrt{d}}\sum_{n=1}^{d-1}\lambda_n^*
    \sum_{x,y=0}^{d-1}\omega^{-nxy}\left\langle A_x^n B_y^n\right\rangle\leq \beta_{\max}^{C},
\end{eqnarray}
where $\beta_{\max}^{C}$ stands for the maximal classical value of $I_{\max}$. It is in general difficult to compute $\beta_{\max}^{C}$ analytically, however, for the lowest values of $d=3,5,7$ it was found numerically in \cite{kaniewski2018maximal}; for completeness we listed these values in Table \ref{tab:Table}. 
\begin{table}[]
    \centering
    \begin{tabular}{|c|c|c|c|}
    \hline
         $d$ & $\beta_L$ & $\beta_Q$& $\beta_Q/\beta_L$ \\
         \hline 
         3   & $6\cos(\pi/9)$  & 6 & 1.064\\[1ex]
         5   & $4(2+\sqrt{5})$ & 20 & 1.1803\\[1ex]
         7   &    $\simeq 33.3494$             & 42 &1.2594\\
         \hline
    \end{tabular}
    \caption{Maximal classical values of the Bell expression $I_{\max}$ given in Eq. (\ref{Imax}) for $d=3,5,7$. For comparison we also present the maximal quantum values.}
    \label{tab:Table}
\end{table}
%

On the other hand, these Bell inequalities are designed so that their maximal quantum value can be determined straightforwardly. Let us formulate and prove the following fact.
\begin{fakt}
The maximal quantum value of the Bell expressions $I_{\max}^{(d)}$ is $\beta_{\max}^{Q}=d(d-1)$.
\end{fakt}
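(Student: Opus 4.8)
The plan is to prove the two inequalities $\beta_{\max}^{Q}\le d(d-1)$ and $\beta_{\max}^{Q}\ge d(d-1)$ separately, the first through a sum-of-squares decomposition of the Bell operator and the second by exhibiting an explicit quantum realisation on $\ket{\psi_d^+}$. Write the Bell operator associated with $I_{\max}^{(d)}$ as $\mathcal{B}_{\max}=\sum_{n=1}^{d-1}\sum_{k=0}^{d-1}\widetilde{A}^{(n)}_k\otimes B_k^n$; it is Hermitian because, by \eqref{prop1} and unitarity of the $B_k$, the term labelled $(n,k)$ and the term labelled $(d-n,k)$ are Hermitian conjugates and $n\mapsto d-n$ permutes $\{1,\dots,d-1\}$. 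For the upper bound I would introduce the operators $P_{n,k}:=\mathbbm{1}\otimes\mathbbm{1}-\widetilde{A}^{(n)}_k\otimes B_k^n$ and expand $\sum_{n,k}P_{n,k}^\dagger P_{n,k}$. Using that $B_k$ is unitary one gets $P_{n,k}^\dagger P_{n,k}=\mathbbm{1}\otimes\mathbbm{1}+(\widetilde{A}^{(n)}_k)^\dagger\widetilde{A}^{(n)}_k\otimes\mathbbm{1}-\widetilde{A}^{(n)}_k\otimes B_k^n-(\widetilde{A}^{(n)}_k)^\dagger\otimes(B_k^n)^\dagger$; then \eqref{prop1} identifies $(\widetilde{A}^{(n)}_k)^\dagger=\widetilde{A}^{(d-n)}_k$, the sum over $k$ of $\widetilde{A}^{(d-n)}_k\widetilde{A}^{(n)}_k$ collapses to $d\,\mathbbm{1}$ by \eqref{prop2}, and the sum over $n$ of the cross terms recombines (again via $n\mapsto d-n$) into $2\mathcal{B}_{\max}$. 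This yields the identity
\[
\sum_{n=1}^{d-1}\sum_{k=0}^{d-1}P_{n,k}^\dagger P_{n,k}=2d(d-1)\,\mathbbm{1}\otimes\mathbbm{1}-2\mathcal{B}_{\max},
\]
so that $d(d-1)\,\mathbbm{1}-\mathcal{B}_{\max}=\tfrac12\sum_{n,k}P_{n,k}^\dagger P_{n,k}\ge 0$, and hence $\langle\psi|\mathcal{B}_{\max}|\psi\rangle\le d(d-1)$ for every state, i.e.\ $\beta_{\max}^{Q}\le d(d-1)$.

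For tightness I would invert the replacement rule exactly as in \eqref{AObs}: choose Alice's observables $A_x=\overline{O}_x$, so that by construction $\widetilde{A}^{(n)}_k=(XZ^k)^n$ for all $n,k$, and take Bob's observables to be $B_k=XZ^{-k}$. With this choice $\widetilde{A}^{(n)}_k\otimes B_k^n=(XZ^k)^n\otimes(XZ^{-k})^n=\mathcal{G}^n_k=[G_1'(G_2')^k]^n$, so $\mathcal{B}_{\max}$ reduces to a sum of powers of the stabilising operators of $\ket{\psi_d^+}$ from \eqref{maxentstab}. Since $\ket{\psi_d^+}$ is the $+1$ eigenstate of $G_1'$ and $G_2'$, it is a $+1$ eigenstate of every $\mathcal{G}^n_k$, so each of the $d(d-1)$ expectation values $\langle\widetilde{A}^{(n)}_k B_k^n\rangle$ equals $1$ and therefore $I_{\max}^{(d)}=d(d-1)$ is attained. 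Combining the two parts gives $\beta_{\max}^{Q}=d(d-1)$.

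The only genuinely technical input is the pair of relations \eqref{prop1}--\eqref{prop2} for the operators $\widetilde{A}^{(n)}_k$, which are already established in Fact~\ref{fact:properties} of Appendix~\ref{App0}; granting them, the SOS bookkeeping is routine, so I do not anticipate a serious obstacle. The one point that warrants a little care is that $\beta_{\max}^{Q}$ is a supremum over \emph{all} quantum strategies, including non-projective ones: this can be handled either by a Naimark dilation (reducing to projective measurements, where the above computation applies verbatim), or directly by noting that for general measurements one only has $(\widetilde{A}^{(n)}_k)^\dagger\widetilde{A}^{(n)}_k\ge 0$ and $(B_k^n)^\dagger B_k^n\le\mathbbm{1}$, whence $(\widetilde{A}^{(n)}_k)^\dagger\widetilde{A}^{(n)}_k\otimes(B_k^n)^\dagger B_k^n\le(\widetilde{A}^{(n)}_k)^\dagger\widetilde{A}^{(n)}_k\otimes\mathbbm{1}$ and, after summing over $k$ using \eqref{prop2}, one still gets $\sum_{n,k}P_{n,k}^\dagger P_{n,k}\le 2d(d-1)\,\mathbbm{1}\otimes\mathbbm{1}-2\mathcal{B}_{\max}$, which is all that is needed.
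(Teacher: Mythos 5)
Your proof is correct and follows essentially the same route as the paper: the identical sum-of-squares decomposition $d(d-1)\mathbbm{1}-\mathcal{B}_{\max}=\tfrac12\sum_{n,k}\bigl(\mathbbm{1}-\widetilde{A}^{(n)}_k\otimes B_k^n\bigr)^{\dagger}\bigl(\mathbbm{1}-\widetilde{A}^{(n)}_k\otimes B_k^n\bigr)$ built on \eqref{prop1}--\eqref{prop2} for the upper bound, and the same inversion of the replacement rule turning $\mathcal{B}_{\max}$ into a sum of stabilizers of $\ket{\psi_d^+}$ for tightness. Your closing remark on non-projective strategies is a small, valid addition that the paper handles implicitly by working with unitary observables.
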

\begin{proof}
The proof is straightforward and consists of two steps. First, we denote by 
\begin{equation}
    \mathcal{B}_{\max}=\frac{1}{\sqrt{d}}\sum_{n=1}^{d-1}\frac{1}{\lambda_n}
    \sum_{x,y=0}^{d-1}\omega^{-nxy} A_x^n \otimes B_y^n
\end{equation}
a Bell operator associated to the expression $I_{\max}^{(d)}$, where $A_x$ and $B_y$ are arbitrary $d$-outcome unitary observables. Second, one uses Eq. (\ref{prop2}) as well as the fact that the Bell operator is Hermitian to observe that the following sum-of-squares decomposition holds true
\begin{equation}
d(d-1)\mathbbm{1}-\mathcal{B}_{\max}=\frac{1}{2}\sum_{n=1}^{d-1}\sum_{y=0}^{d-1}
\left(\mathbbm{1}-\widetilde{A}_y^{(n)}\otimes B_y^n\right)^{\dagger}\left(\mathbbm{1}-\widetilde{A}_y^{(n)}\otimes B_y^n\right).
\end{equation}
Consequently, $d(d-1)\mathbbm{1}-\mathcal{B}_{\max}$ is a positive semi-definite operator
for any choice of local observables, and thus $\beta_{\max}^{(d)}\leq  d(d-1)$. 
To prove that this inequality is tight we can construct a quantum realisation for which $I_{\max}^{(d)}=d(d-1)$. Precisely, we notice that for the following choice of observables for Alice and Bob [cf. Eq. (\ref{AObs})],
\begin{equation}
    A_x^{n}=\frac{\lambda_n}{\sqrt{d}}\sum_{k=0}^{d-1}\omega^{nxk}\omega^{nk(k+1)}(XZ^k)^n,\qquad
    B_y^n=(XZ^{-k})^{n}
\end{equation}
the Bell operator $\mathcal{B}_{\max}$ simply becomes a sum of
the stabilizing operators of $\ket{\psi_{d}^+}$,
\begin{equation}
    \mathcal{B}_{\max}=\sum_{n=1}^{d-1}\sum_{k=0}^{d-1} \left[G'_1(G_2')^k\right]^n,
\end{equation}
meaning that $\langle\psi_{d}^+|\mathcal{B}_{\max}|\psi_d^+\rangle=d(d-1)$. As a result
$\beta_{\max}^{(d)}=d(d-1)$, which completes the proof.
\end{proof}

\subsection{General construction}\label{sec: general construction}

We are now ready to provide our general construction of Bell inequalities for
arbitrary graph states. Let us first set the notation. 

Consider a graph $\mathcal{G}=(\mathcal{V},\mathcal{E},\mathcal{R},d)$ and choose two of its vertices that are connected. Without any loss of generality we can label them by $1$ and $2$. Let then $\N_1$ and $N_1$ be respectively the neighbourhood of the first vertex, i.e., the set of all vertices that are connected to it, and its cardinality. Clearly, we can relabel all the other neighbours of vertex $1$ by $j \in \N_1 \setminus \{2\} \equiv \{ 3,\ldots,N_1+1 \}$. We finally label the remaining vertices that are not connected to the first vertex as $l \in \mathcal{V} \setminus \{1,\N_1 \} \equiv \{ N_1+2,\ldots, N \}$. The generators corresponding to the graph $\mathcal{G}$ are denoted $G_i$ [see Eq. (\ref{Gi-m}) for the definition thereof], whereas 
the graph state stabilized by them by $\ket{G}$.
%
%

Let us then define the Bell scenario. 
It will be beneficial for our construction to slightly modify the way we denote the 
observers and the observables they measure. Precisley, the observables measured by the first two parties are denoted by $A_x$ and $B_y$ with $x,y=0,\ldots,d-1$, respectively; notice that both them can choose among $d$ different settings. Then, the other observers connected to the first party $A$ measure three observables which we denote 
$C^{(i)}_z$ with $z=0,1,2$ and $i\in\mathcal{N}_1\setminus\{2\}$. The remaining observers (that do not belong to $\mathcal{N}_1$) have only two observables at their disposal, denoted $D^{(i)}_{0},D^{(i)}_{1}$ where $i \in  \{ N_1+2,\ldots, N \}$.

To derive a Bell inequality tailored to the graph state $\ket{G}$  we begin by rewriting the stabilizing operators $G_i$ corresponding to $\mathcal{G}$ by explicitly presenting operators acting on the first two sites as well as on the neighbourhood $\mathcal{N}_1$. The first two stabilizing operators read
\begin{equation}\label{StabOp2}
    G_1=X_1\otimes Z^{r_{1,2}}_2\otimes\bigotimes_{m\in \mathcal{N}_1\setminus\{2\}}Z_m^{r_{1,m}}
\end{equation}
and
\begin{equation}
    G_2=Z^{r_{1,2}}_1\otimes X_2\otimes\bigotimes_{m\in \mathcal{N}_1\setminus\{2\}}Z_m^{r_{2,m}}\otimes\bigotimes_{m=N_1+2}^N Z_m^{r_{2,m}}.
\end{equation}
Then, those associated to the other vertices belonging to $\mathcal{N}_1$ are given by
%
%
\begin{equation}
    G_i=Z^{r_{1,i}}_1\otimes Z_2^{r_{2,i}}\otimes X_i\otimes\bigotimes_{m\in \mathcal{N}_1\setminus\{2,i\}}Z_m^{r_{i,m}}\otimes\bigotimes_{m=N_1+2}^NZ_m^{r_{i,m}},
\end{equation}
where $j=3,\ldots,N_1$, whereas the remaining 
$G_i$'s for $i\in\{N_1+2,\ldots,N\}$ are of the following form
\begin{equation}
    G_i=\mathbbm{1}_1\otimes Z_2^{r_{2,i}}\otimes \bigotimes_{m\in \mathcal{N}_1\setminus\{2\}}Z_m^{r_{i,m}}\otimes X_i\otimes\bigotimes_{m\in \{N_1+2,\ldots,N\}\setminus\{i\}}Z_m^{r_{i,m}},
\end{equation}
It is worth adding here that since by assumption the first two vertices are connected, 
$r_{1,2}\neq 0$. Moreover, $G_1$ acts trivially on all sites that are outside $\mathcal{N}_1\cup \{1\}$.

Given the stabilizing operators, let us then follow the procedure outline already 
in the previous examples. We begin by constructing a suitable set of 
stabilizing operators. First, to create at the first site the operators
$XZ^k$ required for the assignment (\ref{assign1}), we consider
products $G_1G_2^k$ with $k=0,\ldots,d-1$. This set, however, does not
uniquely define the graph state $\ket{G}$ as it lacks
the other generators. To include them we first notice that any
$G_i$ with $i\in\mathcal{N}_1\setminus\{2\}$ contains the 
$Z$ operator or its power at the first position and therefore
we take their products with $G_1$, that is, 
$G_1G_i$ with $i\in\mathcal{N}_1\setminus\{2\}$, again to obtain $XZ^k$ at the first site. 
On the other hand, the remaining generators $G_i$ for $i\in\{N_1+2,\ldots,N\}$
have the identity at the first position and therefore 
we directly add them to the set.

Thus, the total list of the stabilizing operators that we use to construct a Bell inequality is
\begin{equation}\label{StabOp2}
\begin{array}{ccll}
    \mathcal{G}_{1,k}^n&:=&(G_1G_2^k)^n\qquad &(k=0,\ldots,d-1),\\[1ex]
    \mathcal{G}_{2,k}^n&:=&\left(G_1G_k\right)^n\qquad &(k=3,\ldots,N_1+1),\\[1ex]
    \mathcal{G}_{3,k}^n&:=&G_k^n&(k=N_1+2,\ldots,N),
\end{array}
\end{equation}
where we have added powers to include all outcomes in the Bell scenario.
Let us now write these operators explicitly
\begin{equation}\label{G1}
    \mathcal{G}^n_{1,k}=\left(XZ^{kr_{1,2}}\right)_1^n\otimes \left(Z^{r_{1,2}}X^k\right)_2^n\otimes\bigotimes_{m\in\mathcal{N}_1\setminus\{2\}}Z_m^{n(r_{1,m}+kr_{2,m})}
    \otimes\bigotimes_{m\in\{N_1+2,\ldots,N\}}Z_m^{nk r_{2,m}}
\end{equation}
for $k=0,\ldots,d-1$,
\begin{equation}\label{G2}
    \mathcal{G}^n_{2,k}=\left(XZ^{r_{1,k}}\right)^n_1\otimes Z^{n(r_{1,2}+r_{2,k})}_2\otimes (Z^{r_{1,k}}X)_k^n\otimes\bigotimes_{m\in\mathcal{N}_1\setminus\{2,k\}}Z_m^{n(r_{1,m}+r_{k,m})}
    \otimes\bigotimes_{m\in\{N_1+2,\ldots,N\}}Z_m^{nr_{k,m}}
\end{equation}
for $k=3,\ldots,N_1+2$, and
\begin{equation}\label{G3}
    \mathcal{G}^n_{3,k}=\mathbbm{1}_1\otimes Z_2^{nr_{2,k}}\otimes \bigotimes_{m\in \mathcal{N}_1\setminus\{2\}}Z_m^{nr_{k,m}}\otimes X^n_k\otimes\bigotimes_{m\in \{N_1+2,\ldots,N\}\setminus\{k\}}Z_m^{nr_{k,m}}
\end{equation}
for $k\in\{N_1+2,\ldots,N\}$.

We associate to each of these stabilizing operators an expectation value
in which the local operators are replaced by $d$-outcome observables
or combinations thereof. Let us begin with the first site where
we have $(XZ^{kr_{1,2}})^n$ with $k=0,\ldots,d-1$, $XZ^{r_{1,i}}$ with 
$i=3,\ldots,N_1$ and the identity. It is important to notice here
that due to the fact that $d$ is a prime number, for any $r_{1,2}\neq 0$, $kr_{1,2}$ 
spans the whole set $\{0,\ldots,d-1\}$ for $k=0,\ldots,d-1$; in other words, 
the function $f(k)=kr_{1,2}$ defined on the set $\{0,\ldots,d-1\}$ is a one-to-one function. 
Thus, $XZ^{kr_{1,2}}$ contains all the $d$ different matrices appearing in the transformation (\ref{assign1}).
We thus substitute
\begin{equation}
(XZ^{kr_{1,2}})^n\,\to\,\widetilde{A}_{kr_{1,2}}^{(n)}:= \frac{\omega^{-nkr_{1,2}(kr_{1,2}+1)}}{\sqrt{d}\,\lambda_n}\sum_{x=0}^{d-1}\omega^{-nkr_{1,2}x}A_x^n.    
\end{equation}
Analogously, we substitute
\begin{equation}
(XZ^{r_{1,i}})^n\,\to\,\widetilde{A}_{r_{1,i}}^{(n)}:= \frac{\omega^{-nr_{1,i}(kr_{1,i}+1)}}{\sqrt{d}\,\lambda_n}\sum_{x=0}^{d-1}\omega^{-nr_{1,i}x}A_x^n    \end{equation}
for $i=3,\ldots,N_1+1$; in both cases $n=1,\ldots,d-1$. 

Let us then move to the second site. The matrices appearing there 
are $Z^{r_{1,2}}X^k$ with $k=0,\ldots,d-1$
and $Z^{n(r_{1,2}+r_{2,i})}$ with $i=3,\ldots,N_1+1$. Since for any $r_{1,2}$
the former are all proper observables in our scenario, that is, they are unitary and
their spectra belong to $\{1,\omega^1,\ldots,\omega^{d-1}\}$, we can directly 
substitute them by observables $B_k$. Specifically, for $k=0$ we assign
\begin{equation}
Z^n\,\to\, B_0^n
\end{equation}
which implies in particular that
\begin{equation}
    Z^{nr_{1,2}}\,\to\, B_0^{nr_{1,2}},
\end{equation}
and for the remaining $k=1,\ldots,d-1$,
\begin{equation}
    (Z^{r_{1,2}}X^k)^n\,\to\, B_{k}^n.
\end{equation}
We distinguish the case $k=0$ to simplify the assignment of observables to 
the other set of matrices $Z^{n(r_{1,2}+r_{2,i})}$ with $i=3,\ldots,N_1+1$. 
These are simply powers of $Z$ and thus we associate with them a single observable $B_0$; precisely,
\begin{equation}
    Z^{n(r_{1,2}+r_{2,i})}\to B_0^{n(r_{1,2}+r_{2,i})}.
\end{equation}

Let us now consider all sites from $\mathcal{N}_1\setminus\{2\}$.
From Eqs. (\ref{G1}), (\ref{G2}) and (\ref{G3}) 
it follows that 
the operators appearing there are $Z^{r_{1,i}}X$ with $i=3,\ldots,N_1+1$
and powers of $Z$, and thus we can make the following replacements
\begin{equation}
Z\,\to\, C_0^{(i)}\qquad\mathrm{and}\qquad
    Z^{r_{1,i}}X\,\to\, C_1^{(i)}
\end{equation}
for any $i=3,\ldots,N_1$. Finally, for the remaining sites we have
simply the $X$ operator at various sites and powers of $Z$. Thus,
for any $i=N_1+2,\ldots,N$,
\begin{equation}
Z\,\to\, D_0^{(i)}\qquad\mathrm{and}\qquad
    X\,\to\, D_1^{(i)}.
\end{equation}

Collecting all these substitutions together we have
\begin{equation}\label{G02}
\mathcal{G}_{1,0}^n\,\to\,\widetilde{\mathcal{G}}_{1,0}^{(n)}:= \widetilde{A}_{0}^{(n)}\otimes B_0^{nr_{1,2}}\otimes\bigotimes_{i=3}^{N_1+1}\left[C^{(i)}_0\right]^{nr_{1,i}}
\end{equation}
and
\begin{equation}\label{G12}
    \mathcal{G}_{1,k}^n\,\to\,\widetilde{\mathcal{G}}_{1,k}^{(n)}:= \widetilde{A}_{kr_{1,2}}^{(n)}\otimes B_k^{n}\otimes\bigotimes_{i=3}^{N_1+1} \left[C^{(i)}_0\right]^{n(r_{1,i}+kr_{2,i})}\otimes\bigotimes_{i=N_1+2}^{N}\left[D^{(i)}_0\right]^{nkr_{2,i}}
\end{equation}
for $k=1,\ldots,d-1$. Then, 
\begin{equation}\label{G22}
    \mathcal{G}_{2,k}^{n}\,\to\,\widetilde{\mathcal{G}}_{2,k}^{(n)}:= \widetilde{A}_{r_{1,k}}^{(n)}\otimes B_0^{n(r_{1,k}+r_{2,k})}\bigotimes_{i=3}^{k-1}\left[C_0^{(i)}\right]^{n(r_{1,i}+r_{k,i})}\otimes \left[C^{(k)}_1\right]^n\otimes\bigotimes_{i=k+1}^{N_1+1}\left[C_0^{(i)}\right]^{n(r_{1,i}+r_{k,i})}\bigotimes_{i=N_1+2}^{N}\left[D_0^{(i)}\right]^{nr_{k,i}}
\end{equation}
with $k\in\{3,\ldots,N_1+1\}$, and, finally,
\begin{equation}\label{G32}
    \mathcal{G}_{3,k}^{n}\,\to\,\widetilde{\mathcal{G}}_{3,k}^{(n)}:= B_0^{nr_{2,k}}\bigotimes_{i=3}^{N_1+1}\left[C_0^{(i)}\right]^{nr_{k,i}}\bigotimes_{i=N_1+2}^{k-1}\left[D_0^{(i)}\right]^{nr_{k,i}}\otimes\left[ D_1^{(k)}\right]^n\otimes\bigotimes_{i=k+1}^{N}\left[D_0^{(i)}\right]^{nr_{k,i}}
\end{equation}
for $k\in\{N_1+2,\ldots,N\}$.

Lastly, by taking a weighted sum of expectation values of the above operators, we arrive at the following class of Bell expressions for a given graph state:
\begin{eqnarray}\label{genBell2}
    I_{\mathcal{G}}:=\sum_{n=1}^{d-1}
    \left[\left\langle\widetilde{\mathcal{G}}_{1,0}^{(n)}\right\rangle
 +\sum_{k=1}^{d-1}c_{1,k}\left\langle\widetilde{\mathcal{G}}_{1,k}^{(n)}\right\rangle
+\sum_{k=3}^{N_1+1}c_{2,k}\left\langle\widetilde{\mathcal{G}}_{2,k}^{(n)}\right\rangle+
    \sum_{k=N_1+2}^N\left\langle\widetilde{\mathcal{G}}_{3,k}^{(n)}\right\rangle\right],
\end{eqnarray}
%
%
%
where $c_{i,k}>0$ are some free parameters that satisfy
\begin{equation}\label{condts1}
    c_{1,k}+\sum_{\substack{j=3\\\{j:r_{1,j}=kr_{1,2}\}}}^{N_1+1}c_{2,j}=1
\end{equation}
for each $k=1,\ldots,d-1$, where the second sum goes over all $j$ such that for a fixed $k$, $r_{1,j}=kr_{1,2}$. As we will see below the conditions (\ref{condts1}) are used for constructing sum-of-squares decompositions of the Bell operators corresponding to 
$I_{\mathcal{G}}$, which in turn are crucial for determining the maximal quantum values of $I_{\mathcal{G}}$. In fact, we can prove the following theorem.

\begin{thm}The maximal quantum value of $I_{\mathcal{G}}$ is 
\begin{equation}\label{MaxQVal}
    \beta_{\mathcal{G}}^Q=(d-1)(N-N_1+d-1).
\end{equation}
\end{thm}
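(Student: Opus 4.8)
The plan is to follow exactly the strategy already used in the two worked examples above: produce an explicit sum‑of‑squares (SOS) decomposition of the Bell operator $\mathcal{B}_{\mathcal{G}}$ built from the Bell expression (\ref{genBell2}) with constant $\eta=(d-1)(N-N_1+d-1)$, which by the mechanism of Eq.~(\ref{sos-m}) gives $\beta_{\mathcal{G}}^Q\le\eta$, and then exhibit a quantum realisation attaining $\eta$, so that the bound is in fact a maximum. As usual, by a Naimark dilation it suffices to bound the value over strategies with projective measurements, so from now on all observables $A_x,B_y,C_z^{(i)},D_w^{(i)}$ are unitary with $O^d=\mathbbm{1}$. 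With these $\widetilde{\mathcal{G}}$'s as in (\ref{G02})--(\ref{G32}), the Bell operator is $\mathcal{B}_{\mathcal{G}}=\sum_{n=1}^{d-1}\big[\widetilde{\mathcal{G}}_{1,0}^{(n)}+\sum_{k=1}^{d-1}c_{1,k}\widetilde{\mathcal{G}}_{1,k}^{(n)}+\sum_{k=3}^{N_1+1}c_{2,k}\widetilde{\mathcal{G}}_{2,k}^{(n)}+\sum_{k=N_1+2}^{N}\widetilde{\mathcal{G}}_{3,k}^{(n)}\big]$, and it is Hermitian because $(\widetilde{\mathcal{G}}^{(n)})^{\dagger}=\widetilde{\mathcal{G}}^{(d-n)}$ (the $B,C,D$ tails are unitary and $(\widetilde{A}_k^{(n)})^{\dagger}=\widetilde{A}_k^{(d-n)}$ by (\ref{prop1})).

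For the upper bound I would prove the identity
\begin{multline*}
\beta_{\mathcal{G}}^Q\,\mathbbm{1}-\mathcal{B}_{\mathcal{G}}=\frac12\sum_{n=1}^{d-1}\Big[\big(\mathbbm{1}-\widetilde{\mathcal{G}}_{1,0}^{(n)}\big)^{\dagger}\big(\mathbbm{1}-\widetilde{\mathcal{G}}_{1,0}^{(n)}\big)+\sum_{k=1}^{d-1}c_{1,k}\big(\mathbbm{1}-\widetilde{\mathcal{G}}_{1,k}^{(n)}\big)^{\dagger}\big(\mathbbm{1}-\widetilde{\mathcal{G}}_{1,k}^{(n)}\big)\\
+\sum_{k=3}^{N_1+1}c_{2,k}\big(\mathbbm{1}-\widetilde{\mathcal{G}}_{2,k}^{(n)}\big)^{\dagger}\big(\mathbbm{1}-\widetilde{\mathcal{G}}_{2,k}^{(n)}\big)+\sum_{k=N_1+2}^{N}\big(\mathbbm{1}-\widetilde{\mathcal{G}}_{3,k}^{(n)}\big)^{\dagger}\big(\mathbbm{1}-\widetilde{\mathcal{G}}_{3,k}^{(n)}\big)\Big].
\end{multline*}
To check it, one expands each $(\mathbbm{1}-P)^{\dagger}(\mathbbm{1}-P)=\mathbbm{1}-P-P^{\dagger}+P^{\dagger}P$. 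Summed over $n$, the terms $-P-P^{\dagger}$ reproduce $-2\mathcal{B}_{\mathcal{G}}$ after the reindexing $n\mapsto d-n$. In every $P^{\dagger}P$ the unitarity of the $B,C,D$ factors collapses all sites except the first to the identity, so the $\widetilde{\mathcal{G}}_{3,k}$ terms contribute $\mathbbm{1}$ while the others contribute $\widetilde{A}_m^{(d-n)}\widetilde{A}_m^{(n)}\otimes\mathbbm{1}$ with $m=0$, $m=kr_{1,2}$, $m=r_{1,k}$ respectively. Since $d$ is prime and $r_{1,2}\neq0$, the map $k\mapsto kr_{1,2}$ is a bijection of $\{0,\dots,d-1\}$, and for each $m$ the accumulated coefficient of $\widetilde{A}_m^{(d-n)}\widetilde{A}_m^{(n)}$ equals $c_{1,k}+\sum_{j:\,r_{1,j}=m}c_{2,j}=1$ by (\ref{condts1}) (and equals $1$ for $m=0$, since no neighbour $j$ of vertex $1$ has $r_{1,j}=0$). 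Hence (\ref{prop2}) gives $\sum_{m=0}^{d-1}\widetilde{A}_m^{(d-n)}\widetilde{A}_m^{(n)}=d\,\mathbbm{1}$, so the full $P^{\dagger}P$ part contributes $\big(d+(N-N_1-1)\big)\mathbbm{1}=(N-N_1+d-1)\mathbbm{1}$; the bare $\mathbbm{1}$'s sum to $\big(1+\sum_k c_{1,k}+\sum_k c_{2,k}+(N-N_1-1)\big)\mathbbm{1}=(N-N_1+d-1)\mathbbm{1}$, using that summing (\ref{condts1}) over $k=1,\dots,d-1$ yields $\sum_k c_{1,k}+\sum_j c_{2,j}=d-1$. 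Collecting the pieces gives the right‑hand side $(d-1)(N-N_1+d-1)\mathbbm{1}-\mathcal{B}_{\mathcal{G}}$, which is therefore positive semidefinite for every choice of local observables, so $\beta_{\mathcal{G}}^Q\le(d-1)(N-N_1+d-1)$.

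For the lower bound I would invert the replacement rule via (\ref{AObs}): on the first site take $A_x^n=\frac{\lambda_n}{\sqrt d}\sum_{k}\omega^{nxk}\omega^{nk(k+1)}(XZ^k)^n$, and on the other sites the natural Weyl observables $B_0=Z$, $B_k=Z^{r_{1,2}}X^k$ for $k\ge1$, $C_0^{(i)}=Z$, $C_1^{(i)}=Z^{r_{1,i}}X$, $D_0^{(i)}=Z$, $D_1^{(i)}=X$; these are legitimate $d$‑outcome observables because $d\ge3$ is prime (hence odd) and thus $(Z^{r}X^{k})^{d}=\mathbbm{1}$. With this choice each $\widetilde{A}_m^{(n)}$ becomes $(XZ^m)^n$ and, comparing with (\ref{G1})--(\ref{G3}), each $\widetilde{\mathcal{G}}_{\cdot,k}^{(n)}$ becomes the corresponding stabiliser product $(G_1G_2^k)^n$, $(G_1G_k)^n$, or $G_k^n$. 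Hence $\mathcal{B}_{\mathcal{G}}$ turns into a combination with nonnegative coefficients of products of the commuting generators $G_i$, each of which fixes $\ket{G}$; since the coefficients sum to $(d-1)(N-N_1+d-1)$ (same count as above), $\mathcal{B}_{\mathcal{G}}\ket{G}=(d-1)(N-N_1+d-1)\ket{G}$, so the graph state saturates the bound and $\beta_{\mathcal{G}}^Q=(d-1)(N-N_1+d-1)$.

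The main obstacle is the bookkeeping in the second step: one must verify that, after the $B,C,D$ tails collapse, the coefficients multiplying the $d$ distinct operators $\widetilde{A}_m^{(d-n)}\widetilde{A}_m^{(n)}$ are all precisely $1$ — which is exactly why the constraint (\ref{condts1}) is imposed — and that the residual constant then matches $(d-1)(N-N_1+d-1)$; both facts rest on the bijectivity of $k\mapsto kr_{1,2}$ for prime $d$ and on $r_{1,j}\neq0$ for neighbours of vertex $1$. Minor technical points to settle along the way are the Naimark reduction to projective measurements and the claim that the chosen operators $Z^{r}X^{k}$ have order exactly $d$.
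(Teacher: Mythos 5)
Your proposal is correct and follows essentially the same route as the paper's own proof: the identical sum-of-squares decomposition with weights $c_{1,k}$, $c_{2,k}$ and prefactor $1/2$, verified through the conditions (\ref{condts1})--(\ref{condt2}) and the identity (\ref{prop2}), followed by the same inversion of the replacement rule to turn $\mathcal{B}_{\mathcal{G}}$ into a sum of stabilizers saturating the bound on $\ket{G}$. The extra care you take (Naimark reduction to projective measurements, checking that $Z^rX^k$ has order $d$ for odd prime $d$) is consistent with, and slightly more explicit than, the paper's treatment.
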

\begin{proof}
To prove this statement let us consider a Bell operator corresponding to $I_{\mathcal{G}}$,
\begin{equation}
    \mathcal{B}_{\mathcal{G}}=\sum_{n=1}^{d-1}
    \left[\widetilde{\mathcal{G}}_{1,0}^{(n)}
 +\sum_{k=1}^{d-1}c_{1,k}\,\widetilde{\mathcal{G}}_{1,k}^{(n)}
+\sum_{k=3}^{N_1+1}c_{2,k}\,\widetilde{\mathcal{G}}_{2,k}^{(n)}+
    \sum_{k=N_1+2}^N\widetilde{\mathcal{G}}_{3,k}^{(n)}\right],
\end{equation}
where $\widetilde{\mathcal{G}}_{i,k}^{(n)}$ are defined in Eqs. 
(\ref{G02})-(\ref{G32}). We
show that $\mathcal{B}_{\mathcal{G}}$ 
admits the following sum-of-squares decomposition
\begin{eqnarray}\label{deco24}
    \mathcal{B}_{\mathcal{G}}&\!\!\!=\!\!\!&(d-1)(N-N_1+d-1)\mathbbm{1}\nonumber\\
    &&-\frac{1}{2}\sum_{n=1}^{d-1}
    \left[
   \left(\mathbbm{1}-\widetilde{\mathcal{G}}^{(n)}_{1,0}\right)^{\dagger}\left(\mathbbm{1}-\widetilde{\mathcal{G}}^{(n)}_{1,0}\right)+\sum_{k=1}^{d-1}c_{1,k}\left(\mathbbm{1}-\widetilde{\mathcal{G}}^{(n)}_{1,k}\right)^{\dagger}\left(\mathbbm{1}-\widetilde{\mathcal{G}}^{(n)}_{1,k}\right)
    \right.\nonumber\\
    &&\hspace{1.5cm}\left.+\sum_{k=3}^{N_1+1}c_{2,k}\left(\mathbbm{1}-\widetilde{\mathcal{G}}^{(n)}_{2,k}\right)^{\dagger}\left(\mathbbm{1}-\widetilde{\mathcal{G}}^{(n)}_{2,k}\right)+\sum_{k=N_1+2}^{N}\left(\mathbbm{1}-\widetilde{\mathcal{G}}^{(n)}_{3,k}\right)^{\dagger}\left(\mathbbm{1}-\widetilde{\mathcal{G}}^{(n)}_{3,k}\right)\right].\nonumber\\
\end{eqnarray}

To verify that this decomposition holds true let us expand the expression appearing in the square brackets for a particular $n$,
\begin{eqnarray}\label{rownanie00}
   &&\hspace{-1cm}\left(1+\sum_{k=1}^{d-1}c_{1,k}+\sum_{k=3}^{N_1+1}c_{2,k}+N-N_1-1\right)\mathbbm{1}-\mathcal{B}_{\mathcal{G}}^{(n)}-\left[\mathcal{B}_{\mathcal{G}}^{(n)}\right]^{\dagger}
   +\left(\widetilde{\mathcal{G}}^{(n)}_{1,0}\right)^{\dagger}
    \widetilde{\mathcal{G}}^{(n)}_{1,0}\nonumber\\
    &&+\sum_{k=1}^{d-1}c_{1,k}\left(\widetilde{\mathcal{G}}^{(n)}_{1,k}\right)^{\dagger}\widetilde{\mathcal{G}}^{(n)}_{1,k}
    +\sum_{k=3}^{N_1+1}c_{2,k}\left(\widetilde{\mathcal{G}}^{(n)}_{2,k}\right)^{\dagger}\widetilde{\mathcal{G}}^{(n)}_{2,k}+\sum_{k=N_1+2}^{N}\left(\widetilde{\mathcal{G}}^{(n)}_{3,k}\right)^{\dagger}\widetilde{\mathcal{G}}^{(n)}_{3,k},
   \end{eqnarray}
where $\mathcal{B}^{(n)}_{\mathcal{G}}$ is a part of the Bell operator corresponding to 
a particular $n$, that is, 
\begin{equation}
\mathcal{B}_{\mathcal{G}}^{(n)}=\widetilde{\mathcal{G}}_{1,0}^{(n)}
+\sum_{k=1}^{d-1}c_{1,k}\,
\widetilde{\mathcal{G}}_{1,k}^{(n)}+\sum_{k=3}^{N_1+1}c_{2,k}\,
\widetilde{\mathcal{G}}_{2,k}^{(n)}+\sum_{k=N_1+2}^{N}c_{3,k}\,
\widetilde{\mathcal{G}}_{3,k}^{(n)}.
\end{equation}
We now notice that by summing all the conditions (\ref{condts1}) one can deduce that
\begin{equation}\label{condt2}
    \sum_{k=1}^{d-1}c_{1,k}+\sum_{k=3}^{N_1+1}c_{2,k}=d-1,
\end{equation}
which implies that the coefficient in front of
the identity simplifies to $d+N-N_1-1$. Using the definitions of $\widetilde{\mathcal{G}}_{i,k}^{(n)}$ one then has that
\begin{eqnarray}
    &&\left(\widetilde{\mathcal{G}}^{(n)}_{1,0}\right)^{\dagger}
    \widetilde{\mathcal{G}}^{(n)}_{1,0}+\sum_{k=1}^{d-1}c_{1,k}\left(\widetilde{\mathcal{G}}^{(n)}_{1,k}\right)^{\dagger}\widetilde{\mathcal{G}}^{(n)}_{1,k}
    +\sum_{k=3}^{N_1+1}c_{2,k}\left(\widetilde{\mathcal{G}}^{(n)}_{2,k}\right)^{\dagger}\widetilde{\mathcal{G}}^{(n)}_{2,k}+\sum_{k=N_{1}+2}^N\left(\widetilde{\mathcal{G}}^{(n)}_{3,k}\right)^{\dagger}\widetilde{\mathcal{G}}^{(n)}_{3,k}\nonumber\\
    &&\hspace{1cm}=\left(\widetilde{A}_0^{(n)}\right)^{\dagger}
    \widetilde{A}_0^{(n)}+\sum_{k=1}^{d-1}c_{1,k}\left(\widetilde{A}_{kr_{1,2}}^{(n)}\right)^{\dagger}
    \widetilde{A}_{kr_{1,2}}^{(n)}+\sum_{k=3}^{N_1+1}c_{2,k}\left(\widetilde{A}_{r_{1,k}}^{(n)}\right)^{\dagger}
    \widetilde{A}_{r_{1,k}}^{(n)}+(N-N_1-1)\mathbbm{1}\nonumber\\
    &&\hspace{1cm}=\sum_{k=0}^{d-1}\left(\widetilde{A}_k^{(n)}\right)^{\dagger}
    \widetilde{A}_k^{(n)}+(N-N_1-1)\mathbbm{1}=(d+N-N_1-1)\mathbbm{1},
\end{eqnarray}
where the second line follows from the fact that apart from the first position all the local operators in
$\widetilde{\mathcal{G}}_{i,k}^{(n)}$ are unitary (notice also that $\widetilde{\mathcal{G}}_{3,k}^{(n)}$ have the identity at the first position), whereas the second line stems from the conditions (\ref{prop2}) and (\ref{condts1}).
All this allows us to rewrite (\ref{rownanie00}) simply as $2(d+N-N_1-1)\mathbbm{1}-\mathcal{B}_{\mathcal{G}}^{(n)}-\mathcal{B}_{\mathcal{G}}^{(n)\dagger}$. Taking finally the sum of these terms over $n=1,\ldots,d-1$ we arrive at the decomposition (\ref{deco24}), which completes the first part of the proof.

From the decomposition (\ref{deco24}) one directly infers that 
$(d-1)(d+N-N_1-1)\mathbbm{1}-\mathcal{B}_{\mathcal{G}}$ is a positive semi-definite
operator for any choice of the local observables, which is equivalent to 
say that for any Bell operator $\mathcal{B}_{\mathcal{G}}$ corresponding to 
$I_{\mathcal{G}}$ and any pure state $\ket{\psi}$, the following inequality 
is satisfied
\begin{equation}
    \langle\psi|\mathcal{B}_{\mathcal{G}}|\psi\rangle\leq (d-1)(d+N-N_1-1).
\end{equation}
To show that this inequality is tight, and at the same time complete the proof, let us provide a particular quantum realisation that achieves it. To this end, we can invert the transformation we used to construct $I_{\mathcal{G}}$. Precisely, we let the first party measure $d$ observables $A_{k}$ with $k=0,\ldots,d-1$ which are defined in Eq. (\ref{AObs}); for them $\widetilde{A}_k^{(n)}=(XZ^k)^n$. The remaining parties measure
\begin{equation}
    B_0^n=Z^n,\qquad B_k^n=(Z^{r_{1,2}}X^k)^n\qquad (k=0,\ldots,d-1)
\end{equation}
\begin{equation}
    C_0^{(i)}=Z,\qquad C_1^{(i)}=Z^{r_{1,i}}X
\end{equation}
for $i=3,\ldots,N_1+1$, and, finally,
\begin{equation}
    D_0^{(i)}=Z,\qquad D_1^{(i)}=X
\end{equation}
for $i=N_1+2,\ldots,N$.

It is not difficult to see that for this choice of quantum observables
the Bell operator reduces to a combination of the stabilizing operators of
the given graph state $\ket{G}$, that is, 
\begin{equation}
    \mathcal{B}_{\mathcal{G}}=\sum_{n=1}^{d-1}\left[G_1^n+\sum_{k=1}^{d-1}c_{1,k}
    (G_1G_2^k)^n+\sum_{k=3}^{N_1+1}c_{2,k}(G_1G_k)^n+\sum_{k=N_1+2}^{N}G_k^n\right].
\end{equation}
Owing to the conditions (\ref{condts1}) as well as (\ref{condt2}), one finds that 
\begin{equation}
    \langle G|\mathcal{B}_{\mathcal{G}}|G\rangle=(d-1)(N-N_1+d-1),
\end{equation}
which is what we aimed to prove.
\end{proof}

We have thus obtained a family of Bell expressions whose maximal quantum 
values are achieved by graph states of arbitrary prime local dimension.
To turn them into nontrivial Bell inequalities one still needs to 
determine their maximal classical values which is in general a hard task. 
For the simplest cases such as Bell inequalities for the AME(4,3) state or those 
tailored to the maximally entangled state of two qudits for low $d$'s, the classical bounds can be determined numerically [cf. Eq. \eqref{AMECL} and Table \ref{tab:Table}]. On the other hand, in the next section we show that our inequalities allow to self-test the graph states of local dimension three, and thus for all of them the classical bound is strictly lower than the Tsirelson's bound. It is also worth mentioning that the ratio between the maximal quantum and classial values will certainly depend on the choice of vertices 1 and 2, in particular on the number of neighbours of the first vertex $N_1$ because this number appears in the formula for $\beta_Q$ (\ref{MaxQVal}).

Let us finally mention that our inequalities are scalable in the sense that the number of expectation values they are constructed from scales linearly with $N$. Indeed, it follows from Eq. (\ref{genBell2}) that the number of expectation values in $I_{\mathcal{G}}$ is 
\begin{equation}
    (d-1)[N+(N_1+d)(d-1)]
\end{equation}
which in the worst case $N_1=N-1$ reduces to $(d-1)[Nd+(d-1)^2]$. This number can still be lowered twice because the expectation values in $I_{\mathcal{G}}$ for $n=\lceil d/2\rceil,\ldots,d-1$ are complex conjugations of those for $n=1,\ldots,\lfloor d/2\rfloor$. Another possibility for lowering it number is to choose as the first vertex the one with the lowest neighbourhood. While it is an interesting question whether it is possible to design another construction which requires measuring even less expectation values, it seems that the linear scaling in $N$ is the best one can hope for.

\section{Self-testing of qutrit graph states}
\label{Sec:Self-testing}

Here we show our second main result: we demonstrate that our Bell inequalities can be used to self-test arbritrary graph states of local dimension $d = 3$. In this particular case the general Bell expression (\ref{genBell2}) can be written as
\begin{eqnarray}\label{genBell22}
    I_{\mathcal{G}}:=
    \left\langle\widetilde{\mathcal{G}}_{1,0}^{(n)}\right\rangle
 +\sum_{k=1}^{d-1}c_{1,k}\left\langle\widetilde{\mathcal{G}}_{1,k}^{(n)}\right\rangle
+\sum_{k=3}^{N_1+1}c_{2,k}\left\langle\widetilde{\mathcal{G}}_{2,k}^{(n)}\right\rangle+
    \sum_{k=N_1+2}^N\left\langle\widetilde{\mathcal{G}}_{3,k}^{(n)}\right\rangle+c.c.,
\end{eqnarray}
or explicitly as,
\begin{eqnarray}
    I_{\mathcal{G}}&\!\!\!:=\!\!\!&\left\langle \widetilde{A}_{0}B_0^{r_{1,2}}\prod_{i=3}^{N_1+1}\left[C^{(i)}_0\right]^{r_{1,i}}\right\rangle \nonumber\\
    &&
    +\sum_{k=1}^{2}c_{1,k}\left\langle \widetilde{A}_{kr_{1,2}}B_k\prod_{i=3}^{N_1+1}\left[C^{(i)}_0\right]^{r_{1,i}+kr_{2,i}}\prod_{i=N_1+2}^{N}\left[D^{(i)}_0\right]^{kr_{2,i}}\right\rangle \nonumber\\
    &&+\sum_{k=3}^{N_1+1}c_{2,k}\left\langle \widetilde{A}_{r_{1,k}}B_0^{r_{1,k}+r_{2,k}}\prod_{i=3}^{k-1}\left[C^{(i)}_0\right]^{r_{1,i}+r_{k,i}}\,C^{(k)}_1\,\prod_{i=k+1}^{N_1+1}\left[C^{(i)}_0\right]^{r_{1,i}+r_{k,i}}\prod_{i=N_1+2}^{N}\left[D_0^{(i)}\right]^{r_{k,i}}\right\rangle\nonumber\\
    &&+\sum_{k=N_1+2}^N\left\langle B_0^{r_{2,k}}\prod_{i=3}^{N_1+1}\left[C^{(i)}_0\right]^{r_{k,i}}\prod_{i=N_1+2}^{k-1}\left[D_0^{(i)}\right]^{r_{k,i}}\,D_1^{(k)}\,\prod_{i=k+1}^{N}\left[D_0^{(i)}\right]^{r_{k,i}}\right\rangle+c.c.,
\end{eqnarray}
where $c.c.$ stands for the complex conjugation and represents the $n=2$ term in Eq. (\ref{genBell2}), whereas the coefficients $c_{1,k}$ and $c_{2,k}$ satisfy the condition (\ref{condts1}). 

Let us now prove that maximal violation of $I_{\mathcal{G}}$ can be used to self-test the corresponding graph state according to Definition \ref{definition}. To this aim, we state the following theorem.
\begin{thm}
    Consider a connected graph $G$ and assume that the maximal quantum value of the corresponding Bell expression $I_{\mathcal{G}}$ is achieved by a pure state $\ket{\psi}\in\mathcal{H}_1\otimes\ldots\otimes\mathcal{H}_N$ and observables $A_x$, $B_y$, etc. acting on the local Hilbert spaces $\mathcal{H}_i$. Then, each Hilbert space $\mathcal{H}_i$ decomposes as $\mathcal{H}_i=\mathbbm{C}^3\otimes\mathcal{H}_i'$ and there exist local unitary operators $U_i$ with $i=1,\ldots,N$ such that
    \begin{equation}
     (U_1\otimes\ldots\otimes U_N)\ket{\psi}=\ket{\psi_{G}}\otimes\ket{\mathrm{aux}}
    \end{equation}
    with $\ket{\mathrm{aux}}$ being some state from the auxiliary Hilbert space $\mathcal{H}_1'\otimes\ldots\otimes\mathcal{H}_N'$.
\end{thm}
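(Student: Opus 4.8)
The plan is to follow the standard recipe for self-testing from a sum-of-squares (SOS) decomposition, specialized to $d=3$. From the maximal violation $\langle\psi|\mathcal{B}_{\mathcal{G}}|\psi\rangle=\beta_{\mathcal{G}}^Q$ and the SOS decomposition \eqref{deco24}, each squared term must annihilate $\ket{\psi}$, yielding the stabilizing-type relations $\widetilde{\mathcal{G}}_{i,k}^{(n)}\ket{\psi}=\ket{\psi}$ for all the operators appearing in the decomposition, for $n=1,2$. Unfolding the definitions \eqref{G02}--\eqref{G32}, these become a list of eigenvalue equations relating the (unknown) observables $A_x,B_y,C_z^{(i)},D_w^{(i)}$ on $\ket{\psi}$. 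The goal is to show these relations force, up to local unitaries, the observables to act as $\widetilde{A}_k^{(n)}\mapsto (X Z^k)^n$, $B_k\mapsto Z^{r_{1,2}}X^k$, $C_0^{(i)}\mapsto Z$, $C_1^{(i)}\mapsto Z^{r_{1,i}}X$, $D_0^{(i)}\mapsto Z$, $D_1^{(i)}\mapsto X$ on a tensor factor $\mathbb{C}^3$ in each site, and $\ket{\psi}$ to the graph state on those factors.

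First I would establish local structure site by site. On the first site, invert the replacement \eqref{assign1} (using \eqref{Ak-m}/\eqref{AObs}) to recover operators $\widetilde{O}_k^{(n)}$ from the $A_x$; the relations \eqref{O_n^d} and \eqref{prop2} proven in \cite{kaniewski2018maximal} already guarantee that $\widetilde A_k^{(n)}$ behave like powers of generalized Pauli operators, and one shows the pair $\widetilde A_0\equiv\widetilde A_0^{(1)}$ and $\widetilde A_1\equiv\widetilde A_1^{(1)}$ generates a representation of the qutrit Weyl–Heisenberg relations $\widetilde A_0 \widetilde A_1=\omega\,\widetilde A_1\widetilde A_0$ (up to phases) — this is exactly the $d=3$ content of \cite{kaniewski2018maximal}, so I would invoke it. By the standard representation theory of the qutrit Pauli group, $\mathcal{H}_1\cong\mathbb{C}^3\otimes\mathcal{H}_1'$ with $\widetilde A_0$ acting as $X\otimes\I$ and $\widetilde A_1$ as $Z\otimes\I$ after a local unitary $U_1$ (possibly up to a transposition, harmless for the state as noted in the paper). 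For the $C^{(i)}$ sites, I would show that $C_0^{(i)}$ (order-3 unitary) and $C_1^{(i)}$ anticommute-$\omega$ on $\ket{\psi}$ in the relevant sense using the stabilizer relations $\widetilde{\mathcal G}_{2,k}^{(n)}\ket\psi=\ket\psi$ together with $\widetilde{\mathcal G}_{1,k}^{(n)}\ket\psi=\ket\psi$; the same reasoning applies to the $D^{(i)}$ sites and to $B_y$. The key commutation relations are extracted by multiplying pairs of stabilizer equations and using that the other factors are unitaries acting identically on both sides.

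Next I would pin down the state. Once each site carries a qutrit factor on which the recovered observables act as the appropriate generalized Paulis (tensored with identity on the auxiliary space), the relations $\widetilde{\mathcal G}_{1,0}^{(n)}\ket\psi=\ket\psi$, $\widetilde{\mathcal G}_{1,k}^{(n)}\ket\psi=\ket\psi$, $\widetilde{\mathcal G}_{2,k}^{(n)}\ket\psi=\ket\psi$, $\widetilde{\mathcal G}_{3,k}^{(n)}\ket\psi=\ket\psi$ translate, under $U=U_1\otimes\cdots\otimes U_N$, into $(G_1)^n U\ket\psi=U\ket\psi$, $(G_1G_2^k)^n U\ket\psi=U\ket\psi$, $(G_1G_k)^n U\ket\psi=U\ket\psi$, $G_k^n U\ket\psi = U\ket\psi$ on the $(\mathbb C^3)^{\otimes N}$ factor. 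Since these products generate the full stabilizer group of $\ket{G}$ (that was exactly the design criterion in Sec.~\ref{sec: general construction}: $G_1G_2^k$ and $G_1G_k$ and the leftover $G_k$ together with $G_1$ span all generators), the common $+1$ eigenspace on $(\mathbb C^3)^{\otimes N}$ is one-dimensional and equals $\mathrm{span}\{\ket{G}\}$. Hence $U\ket\psi=\ket{G}\otimes\ket{\mathrm{aux}}$ for some normalized $\ket{\mathrm{aux}}\in\mathcal H_1'\otimes\cdots\otimes\mathcal H_N'$, which is the claim. One technical point: the relations only directly constrain the action of observables on the support of $\ket\psi$; the usual fix is to discard the part of each $\mathcal H_i$ orthogonal to the reduced state's support, or to note that the SOS relations $P_k\ket\psi=0$ suffice because the $U_i$ need only be defined to have the correct action on $\ket\psi$.

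The main obstacle, as usual in self-testing proofs, is the step from operator relations on $\ket{\psi}$ to a genuine tensor-product decomposition $\mathcal H_i=\mathbb C^3\otimes\mathcal H_i'$ with the observables in the standard form — i.e., verifying that the algebra generated by each party's recovered observables, restricted to the support of the local state, is (a direct sum of copies of) the full qutrit Pauli algebra and not some lower-dimensional or reducible degenerate representation. For the first site this is handled by the careful analysis in \cite{kaniewski2018maximal} (the $\lambda_n$ phases in \eqref{lambdan}--\eqref{lambdan3} are precisely what makes $\widetilde A_0,\widetilde A_1$ a clean Weyl pair); for the other sites the subtlety is that a party such as a $C^{(i)}$-party has only two settings, so one must exploit the multiparticle stabilizer relations — not just single-site algebra — to certify that $C_0^{(i)}$ and $C_1^{(i)}$ generate the qutrit Pauli group on $\ket\psi$. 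Handling the commutation relations extracted from products of stabilizer equations, and ensuring no spurious phases survive when $r_{i,j}\neq 1$, is where the real work lies; everything else is bookkeeping that mirrors the $d=2$ argument of \cite{baccari2018scalable}.
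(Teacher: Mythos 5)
Your overall strategy (extract the relations $\widetilde{\mathcal{G}}_{i,k}^{(n)}\ket{\psi}=\ket{\psi}$ from the SOS decomposition, characterize each party's observables via the anticommutation relations and the qutrit result of \cite{kaniewski2018maximal}, then pin the state down as the unique common $+1$ eigenstate of the stabilizers) is the same as the paper's, and your remarks about the two-setting parties needing multipartite relations and about the recursion over the connected graph are on target. However, there is a genuine gap in the final step. The local characterization you invoke does \emph{not} give you observables acting as $XZ^k\otimes\I$ on a qutrit factor ``possibly up to a transposition'': it gives, at \emph{each} site $i$ independently, operators of the form $XZ^k\otimes P_1^{(i)}+(XZ^k)^T\otimes P_2^{(i)}$ with two orthogonal projectors on $\mathcal{H}_i'$ that may both be nonzero. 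Consequently the rotated Bell operator splits into $2^N$ blocks $B_{\boldsymbol{m}}\otimes P_{m_1}^{(1)}\otimes\cdots\otimes P_{m_N}^{(N)}$, one for every pattern of local transpositions, and your assertion that the stabilizer relations ``translate into $(G_1G_2^k)^nU\ket{\psi}=U\ket{\psi}$ on the $(\mathbb{C}^3)^{\otimes N}$ factor'' silently assumes all sites sit in the untransposed block. The global transposition is indeed harmless because the graph state is real, but the mixed patterns are not obviously excluded, and ruling them out is a substantive part of the argument: one must show that for every $\boldsymbol{m}$ with not all $m_i$ equal, the partially transposed stabilizers cannot stabilize a common state, so $B_{\boldsymbol{m}}$ has maximal eigenvalue strictly below $2(N-N_1+2)$ and the corresponding components of $\ket{\psi}$ vanish. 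The paper does this by a case analysis, e.g.\ showing $\mathcal{G}_{1,0}^{T_1}\mathcal{G}_{1,1}^{T_1}\mathcal{G}_{1,2}^{T_1}=\omega\,\mathbbm{1}$ when only the first site is transposed, and showing in the remaining cases that a transposed stabilizer fails to commute with an untransposed one (so by Fact \ref{fact:comut} they share no common eigenvector with eigenvalue one), using connectedness of the graph to always find such a conflicting pair. Without this step your conclusion $U\ket{\psi}=\ket{G}\otimes\ket{\mathrm{aux}}$ does not follow: a priori the state could have weight in a mixed-transposition block stabilized by some partially transposed operator set.

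Two smaller points: the first party's operators are mapped to $XZ^k$ (i.e.\ $\widetilde{A}_0\to X$, $\widetilde{A}_1\to XZ$), not to the pair $X,Z$ as you wrote; and for the $D$-parties the unitarity of the relevant anticommutator is obtained by first using the already-established form of the $B$ (or $C$) observables and then propagating recursively along paths in the graph, which is worth stating explicitly since it is where connectedness enters.
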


Before we present our proof let us mention that it is follows a similar reasoning to the proof of self-testing of $N$-qubit graph states in \cite{baccari2018scalable}, but since we deal here with qutrits it also makes a use of one of the results of Ref. \cite{kaniewski2018maximal}, which for completeness we state in Appendix \ref{Appendix} as Fact \ref{dupablada}. 

\begin{proof}Let us first notice that it is convenient to assume that the local reduced density matrices of the state $\ket{\psi}$ are full rank; otherwise we are able to characterize the observables only on the supports of these reduced density matrices. Moreover, we assume for simplicity that $r_{1,2}=1$; recall that by construction $r_{1,2}\neq 0$. The proof for the other case of $r_{1,2}=2$ goes along the same lines.

The sum-of-squares decomposition (\ref{deco24}) implies the following relations for the state and observables that achieve the maximal quantum value of the Bell expression $I_{\mathcal{G}}$,
\begin{equation}\label{d3cond1}
    \widetilde{\mathcal{G}}_{1,k}^{(n)}\ket{\psi}=\ket{\psi}
\end{equation}
for $k=0,1,2$, 
\begin{equation}\label{d3cond2}
      \widetilde{\mathcal{G}}_{2,k}^{(n)}\ket{\psi}=\ket{\psi}
\end{equation}
for $k=3,\ldots,N_1+1$, and
\begin{equation}\label{d3cond3}
      \widetilde{\mathcal{G}}_{3,k}^{(n)}\ket{\psi}=\ket{\psi}
\end{equation}
for $k=N_1+2,\ldots,N$. 

Before we employ the above relations in order to prove our self-testing statement let us recall that $\widetilde{A}_x^{(n)}$ $(x=0,1,2)$ are combinations of the first party's observables and are not unitary in general; still, they satisfy $\widetilde{A}_x^{(2)}=\widetilde{A}_{x}^{(1)\dagger}$. 
At the same time $B_y$, $C_z^{(i)}$, and $D_{w}^{(i)}$ are all unitary observables
which in the particular case $d=3$ satisfy $B_y^2=B_y^{\dagger}$ etc.
This implies that $\widetilde{\mathcal{G}}_{i,k}^{(2)}=\widetilde{\mathcal{G}}_{i,k}^{(1)\dagger}$.

The main technical step we need is to identify at each site two unitary observables whose anticommutator is unitary. This allows us to make use of Fact \ref{dupablada}
and Corollary \ref{Corollary} (see Appendix \ref{Appendix}) to define local unitary operators that map the two unkown observables to the qutrit ones. For parties having three measurement choices, the remaining observable will be directly mapped to other qutrit operators thanks to anticommutation relations that can be inferred from the sum-of-squares decompositions.

Our proof is quite technical and long and therefore to make it easier to follow we divide it into a few steps. In the first four we characterize every party's observables that give rise to the maximal quantum violation of the inequality, while in the last one we
prove the self-testing statement for the state.\\

\noindent \textbf{Step 1. ($A_x$ observables).}  Let us first determine the form of the first party's observables $A_x$. To this end, we concentrate on conditions (\ref{d3cond1}) which for $n=1$ and $r_{1,2}=1$ can be rewritten as
\begin{eqnarray}\label{Ayuda}
    \widetilde{A}_{0}\otimes B_0\otimes\overline{C}_1\ket{\psi}&\!\!\!=\!\!\!&\ket{\psi},\nonumber\\
    \widetilde{A}_{1}\otimes B_1\otimes\overline{C}_1\overline{C}_2\otimes\overline{D}\ket{\psi}&\!\!\!=\!\!\!&\ket{\psi},\nonumber\\
    \widetilde{A}_{2}\otimes B_2\otimes\overline{C}_1\overline{C}_2^{\dagger}\otimes\overline{D}^{\dagger}\ket{\psi}&\!\!\!=\!\!\!&\ket{\psi},
\end{eqnarray}
where $\overline{C}_i$ and $\overline{D}$ are short-hand notations for 
\begin{equation}
    \overline{C}_i=\bigotimes_{m=3}^{N_1+1}\left[C^{(m)}_0\right]^{r_{i,m}},\qquad 
    \overline{D}=\bigotimes_{m=N_1+2}^{N}\left[D^{(m)}_0\right]^{r_{2,m}},
\end{equation}
where $i=1,2$, and, finally,
\begin{equation}\label{tildeA}
    \widetilde{A}_k\equiv \widetilde{A}_k^{(1)}= \frac{\omega^{-k(k+1)}}{\sqrt{3}\,\lambda_1} \sum^{2}_{t=0} \w^{-tk}  A_{t}.
\end{equation}
Recall that in the case $d=3$, $\widetilde{A}_k^{(2)}=\widetilde{A}_{k}^{(1)\dagger}$. Moreover, since $\widetilde{\mathcal{G}}_{i,k}^{(2)}=\widetilde{\mathcal{G}}_{i,k}^{(1)\dagger}$, Eq. (\ref{d3cond1}) for $n=2$ gives another set of conditions, similar to (\ref{Ayuda}) but with all local operators being Hermitian-conjugated. By the very definition, $B_i$, $\overline{C}_i$ and $\overline{D}$ are unitary and satisfy $B_i^3=\overline{C}_i^3=\overline{D}^3=\mathbbm{1}$.

The above equations contain all three operators $\widetilde{A}_i$ $(i=0,1,2)$. Let us then concentrate on the first condition in (\ref{Ayuda}) and use the fact that $B_0$ and $\overline{C}_1$ are unitary to rewrite it as 
\begin{equation}
    \widetilde{A}_0\ket{\psi}=B_0^{\dagger}\otimes \overline{C}_1^{\dagger}\ket{\psi},
\end{equation}
which, taking into account that $B_0^3=\mathbbm{1}$ as well as 
$\overline{C}_1^3=\mathbbm{1}$, implies also that 
\begin{equation}
    \widetilde{A}_0^2\ket{\psi}=B_0\otimes \overline{C}_1\ket{\psi}.
\end{equation}
We can now use again the first condition in Eq. (\ref{Ayuda}) but with all local operators being "daggered" (recall that it follows from Eq. (\ref{d3cond1}) for $n=2$), which allows us to obtain 
$\widetilde{A}_0^2\ket{\psi}=\widetilde{A}_0^{\dagger}\ket{\psi}$. Since the reduced density matrix corresponding to the first subsystem of $\ket{\psi}$ is full rank, the latter is equivalent to the following relation 
\begin{equation}\label{raimat}
    \widetilde{A}_0^2=\widetilde{A}_0^{\dagger}.
\end{equation}
Using similar arguments one then shows that $\widetilde{A}_0$ is unitary, which together with (\ref{raimat}) implies that $\widetilde{A}_0^3=\mathbbm{1}$ and thus $\widetilde{A}_0$ is a proper quantum observable.

Employing then the second and the third relation in Eq. (\ref{Ayuda}), one can draw the same conclusions for the other two operators on Alice's side, $\widetilde{A}_1$ and $\widetilde{A}_2$. As a consequence, all three $\widetilde{A}_i$ are quantum observables; in particular, they satisfy 
\begin{equation}\label{Raimat2}
    \widetilde{A}_i^2=\widetilde{A}_i^{\dagger}\qquad (i=1,2,3).
\end{equation}

Let us now use (\ref{Raimat2}) to characterize $A_x$ observables. By substituting Eq. (\ref{tildeA}) into it one finds, after a bit of algebra, that the observables $A_x$ are related via the following formula:
\begin{eqnarray}\label{AnticommA}
    \{A_i,A_j\}=-\omega A_k^{\dagger},
\end{eqnarray}
where $i,j,k=0,1,2$ and $i\neq j\neq k$.
Using again Eq. (\ref{tildeA}) one can also derive similar 
relations for the tilted observables, 
\begin{equation}\label{rel1}
\{\widetilde{A}_i,\widetilde{A}_j\}=-\widetilde{A}_k^{\dagger}
\end{equation}
with $i,j,k=0,1,2$ such that $i\neq j\neq k$.

Importantly, equations (\ref{AnticommA}) and, analogously, (\ref{rel1}) were solved in Ref. \cite{kaniewski2018maximal}. 
In fact, it was proven there (cf. Fact \ref{dupablada} and Corollary \ref{Corollary} in Appendix \ref{Appendix}) that one can identify a qutrit Hilbert space in $\mathcal{H}_1$ in the sense that $\mathcal{H}_1=\mathbbm{C}^3\otimes\mathcal{H}_1'$
for some auxiliary Hilbert space $\mathcal{H}_1'$, and that there exists a unitary operation $U_1:\mathcal{H}_{1}\to \mathcal{H}_1$
such that [notice that the third observable $\widetilde{A}_2$ is obtained from the first two by using \eqref{rel1}]
\begin{equation}\label{similar}
    U_1\, \widetilde{A}_i\,U_1=XZ^i \otimes P_1^{(1)}+(XZ^i)^T\otimes P_2^{(1)}\qquad (i=0,1,2),
\end{equation}
where $P_i^{(1)}$ $(i=1,2)$ are two projectors
such that $P_1^{(1)}+P_2^{(1)}=\mathbbm{1}_1'$, where $\mathbbm{1}_1'$
is the indentity on $\mathcal{H}_1'$. There are thus two inequivalent sets of observables at the first site that give rise to the maximal violation of our Bell inequality: $XZ^i$ with $i=0,1,2$ and their transpositions. \ \\

\noindent \textbf{Step 2. ($B_y$ observables).} We can now move on to characterizing the $B_y$ observables. First, by combining the identities in (\ref{Ayuda}) with Eq. (\ref{rel1}) and then by using the fact that $\overline{C}_1$ and $\overline{C}_2$ commute as well as that $\widetilde{A}_i$ are unitary, one finds the following equations
\begin{eqnarray}
\{B_i,B_{j}\}\ket{\psi}&\!\!\!=\!\!\!&-B_{k}^{\dagger}\ket{\psi}
%
\end{eqnarray}
for all triples $i,j,k$ such that $i\neq j\neq k$.
By virtue of the fact that all the single-party reduced density matrices of $\ket{\psi}$ are full rank, these are equivalent to the following matrix equations
\begin{eqnarray}\label{AntiB}
\{B_0,B_{1}\}&\!\!\!=\!\!\!&-B_{2}^{\dagger},    \nonumber\\
\{B_0,B_{2}\}&\!\!\!=\!\!\!&-B_{1}^{\dagger},   \nonumber\\
\{B_1,B_2\}&\!\!\!=\!\!\!&-B_0^{\dagger},
\end{eqnarray}
and thus the $B_y$ observables satisfy analogous relations to $A_x$. 
This implies that $\mathcal{H}_2=\mathbbm{C}^3\otimes\mathcal{H}_2'$ for some auxiliary Hilbert space $\mathcal{H}_2'$, and there exists a unitary operation $U_2:\mathcal{H}_2\to\mathcal{H}_2$ such that
(cf. Fact \ref{dupablada} and Corollary \ref{Corollary})
%
%
%
\begin{equation}\label{Bopt2}
    U_2\, B_i\, U_2^{\dagger}=ZX^i\otimes P_1^{(2)}+(ZX^i)^T\otimes P_2^{(2)}.
\end{equation}
for $i=0,1,2$, where $P_1^{(2)}$ and $P_1^{(2)}$ are two orthogonal projectors such that $P_2^{(2)}+P_2^{(2)}=\mathbbm{1}_2'$, where $\mathbbm{1}_2'$ is the identity acting on $\mathcal{H}_2'$ [notice that as before the form of the third observable $B_2$ follows from \eqref{AntiB}].\\

\noindent\textbf{Step 3. ($C_z^{(i)}$ observables)}. Let us now move on to the $C_z^{(i)}$ observables that are measured by the observers numbered by $i=3,\ldots,N_1+1$, and consider the first equation in (\ref{Ayuda}) and the conditions that follow from (\ref{d3cond2}), which for our purposes we state as
\begin{equation}\label{maladupa}
    \widetilde{A}_{0}\otimes B_0\otimes \left[C_0^{(k)}\right]^{r_{1,k}}\otimes\overline{C}_{0,k}\ket{\psi}=\ket{\psi}
\end{equation}
%
%
and
\begin{equation}\label{wielkadupa}
    \widetilde{A}_{r_{1,k}}\otimes B_0^{r_{1,k}+r_{2,k}}\otimes C^{(k)}_1\otimes \overline{C}_{0,k}'\otimes  \overline{D}_k\ket{\psi}=\ket{\psi}
\end{equation}
with $k=3,\ldots,N_1+1$, and
\begin{equation}
  \overline{C}_{0,k}=\bigotimes_{\substack{m=3\\m\neq k}}^{N_1}\left[C^{(m)}_0\right]^{r_{1,m}},\qquad \overline{C}_{0,k}'=\bigotimes_{\substack{m=3\\m\neq k}}^{N_1}\left[C^{(m)}_0\right]^{r_{1,m}+r_{k,m}},\qquad \overline{D}_k=\bigotimes_{m=N_1+1}^{N}\left[D^{(m)}_0\right]^{r_{k,m}}.
\end{equation}

Importantly, $r_{1,k}\neq 0$ for any $k=3,\ldots,N_1+1$, and hence
all equations in (\ref{wielkadupa}) contain either $\widetilde{A}_1$ 
or $\widetilde{A}_2$. Let us then exploit the fact that all local operators in both Eqs. (\ref{maladupa}) and (\ref{wielkadupa}) are unitary and therefore these equations can be rewritten as 
\begin{eqnarray}
    \left[C_0^{(k)}\right]^{r_{1,k}}\ket{\psi}&\!\!\!=\!\!\!&\widetilde{A}_0^{\dagger}\otimes B_0^{\dagger}\otimes \overline{C}_{0,k}^{\dagger}\ket{\psi},\nonumber\\
    C_1^{(k)}\ket{\psi}&\!\!\!=\!\!\!&\widetilde{A}_{1}^{\dagger}\otimes B_0^{-(r_{1,k}+r_{2,k})} \otimes \left[\overline{C}_{0,k}'\right]^{\dagger}\otimes \overline{D}_k^{\dagger}\ket{\psi}.
\end{eqnarray}
Crucially, $\overline{C}_{0,k}$, $\overline{C}_{0,k}'$ commute and therefore we deduce that 
\begin{equation}
  \left \{\left[C_0^{(k)}\right]^{r_{1,k}},C_1^{(k)}\right\}\ket{\psi}=\{\widetilde{A}_{0},\widetilde{A}_{1}\}^{\dagger}\otimes B_0^{\lambda_k}\otimes \overline{C}_{0,k}^{\dagger}\left[\overline{C}_{0,k}'\right]^{\dagger}\otimes \overline{D}_k^{\dagger}\ket{\psi},
\end{equation}
where for simplicity we denoted $\lambda_k=-(1+r_{1,k}+r_{2,k})$. In a fully analogous way we can derive 
\begin{equation}
  \left  \{\left[C_0^{(k)}\right]^{r_{1,k}},C_1^{(k)}\right\}^{\dagger}\ket{\psi}=\{\widetilde{A}_{0},\widetilde{A}_{1}\}\otimes B_0^{-\lambda_k}\otimes \overline{C}_{0,k}\,\overline{C}_{0,k}'\otimes\overline{D}_k\ket{\psi}.
\end{equation}
Both these conditions when combined with Eq. \eqref{rel1} allow us to conclude that 
\begin{equation}\label{anticommutator15}
    \left  \{\left[C_0^{(k)}\right]^{r_{1,k}},C_1^{(k)}\right\}^{\dagger}\left  \{\left[C_0^{(k)}\right]^{r_{1,k}},C_1^{(k)}\right\}=
    \left  \{\left[C_0^{(k)}\right]^{r_{1,k}},C_1^{(k)}\right\}\left  \{\left[C_0^{(k)}\right]^{r_{1,k}},C_1^{(k)}\right\}^{\dagger}=\mathbbm{1}_k,
\end{equation}
i.e., the above anticommutator is unitary.
We can therefore use Fact \ref{dupablada} and Corollary \ref{Corollary}
(see Appendix \ref{Appendix}) which say that for any $k=3,\ldots,N_1+1$, $\mathcal{H}_k=\mathbbm{C}^3\otimes\mathcal{H}_k'$ with
$\mathcal{H}_k'$ being some auxiliary Hilbert space of unknown dimension, as well as that there exist unitary operations $U_k$ such that
\begin{equation}\label{Copt1}
    U_k\,\left[C_0^{(k)}\right]^{r_{1,k}}\,U_k^{\dagger}=Z^{r_{1,k}}\otimes\mathbbm{1}_k',
\end{equation}
and
\begin{equation}\label{Copt2}
    U_k\,C_1^{(k)}\,U_k^{\dagger}=Z^{r_{1,k}}X\otimes {P_1^{(k)}}+
    (Z^{r_{1,k}}X)^T\otimes {P_2^{(k)}},
\end{equation}
where $P_1^{(k)}+P_2^{(k)}=\mathbbm{1}_k'$. \\

\noindent\textbf{Step 4. ($D_w^{(i)}$ observables)}. Let us finally focus on the $D$ observables. We first consider all vertices $i\in\{N_2+2,\ldots,N\}$ that are connected to the second vertex. For them $r_{2,k}\neq 0$ and therefore we have from Eq. (\ref{d3cond3}),
\begin{equation}\label{D1}
    B_0^{r_{2,k}}\otimes \widetilde{C}_{0,k}\otimes \overline{D}_{0,k}'\otimes D_{1}^{(k)}\,\ket{\psi}=\ket{\psi},
\end{equation}
where
\begin{equation}
    \widetilde{C}_{0,k}=\bigotimes_{m=3}^{N_1}\left[C^{(m)}_0\right]^{r_{k,m}},
    \qquad  \overline{D}_{0,k}'=\bigotimes_{\substack{i=N_1+1\\i\neq k}}^N \left[D_0^{(i)}\right]^{r_{k,i}}.
\end{equation}
At the same time, Eq. (\ref{d3cond1}) for $k=1$ gives
\begin{equation}
\widetilde{A}_{r_{1,2}}\otimes B_1\otimes \overline{C}_1\overline{C}_2 \otimes \left[D_0^{(k)}\right]^{r_{2,k}}\otimes \overline{D}_{0,k}\,\ket{\psi}=\ket{\psi}
\end{equation}
where
\begin{equation}\label{D2}
    \overline{D}_{0,k}=\bigotimes_{\substack{i=N_1+1\\i\neq k}}^N \left[D_0^{(i)}\right]^{r_{2,i}}.
\end{equation}
We then rewrite both Eqs. (\ref{D1}) and (\ref{D2}) as
\begin{eqnarray}
    D_1^{(k)}\ket{\psi}&\!\!\!=\!\!\!&B_0^{-r_{2,k}}\otimes \widetilde{C}_{0,k}^{\dagger}\otimes \left[\overline{D}'_{0,k}\right]^{\dagger}\ket{\psi},\nonumber\\ \left[D_0^{(k)}\right]^{r_{2,k}}\ket{\psi}&\!\!\!=\!\!\!&\widetilde{A}_{1}^{\dagger}\otimes B_1^{\dagger}\otimes \overline{C}_1^{\dagger}\overline{C}_2^{\dagger}\otimes\overline{D}_{0,k}^{\dagger}\ket{\psi}
\end{eqnarray}
Since as already proven, the anticommutator of $B_0^{-r_{2,k}}$ and $B_1$ is unitary for any $k$ such that $r_{2,k}\neq 0$, the above equations imply that for all $k=N_1+2,\ldots,N$ 
for which $r_{2,k}\neq 0$, the anticommutator of 
$D_1^{(k)}$ and $[D_0^{(k)}]^{r_{2,k}}$ is unitary too. 

We can now move on to those vertices $i\in\{N_1+2,\ldots,N\}$ that are connected to the remaining neighbours of the first vertex. In this case we proceed in the same way as above, however, we now combine the conditions (\ref{d3cond2}) and (\ref{d3cond3}) as well as we employ the forms of the $C_z^{(i)}$ operators given in Eqs. (\ref{Copt1}) and (\ref{Copt2}) to observe that for any site $k$ which is connected to a neighbour $m$ of the first vertex the anticommutator of $D_{1}^{(k)}$
and $[D_{0}^{(k)}]^{r_{m,k}}$ is unitary and therefore $D_{0/1}^{(k)}$
satisfy the assumptions of Fact \ref{dupablada} in Appendix \ref{Appendix}.

Let us finally consider the remaining vertices that are not neighbours of the first vertex. For each of them we can prove that the anticommutator of the local observables $D_{0/1}^{(k)}$ or powers thereof is unitary in a recursive way starting from vertices connected to those that are connected to the neighbours of the first vertex and employing the relations \eqref{d3cond3}. Step by step we can prove the same statement for all $D$ sites exploiting the fact that the graph is connected and therefore for each vertex there is a path connecting it with any other vertex in the graph. 

We thus conclude that for all vertices $k=N_1+2,\ldots,N$ the local Hilbert is $\mathcal{H}_{k}=\mathbbm{C}^3\otimes \mathcal{H}_k'$
for some finite-dimensional $\mathcal{H}_k'$ and that there exists a unitary $U_k$ such that (cf. Fact \ref{dupablada} and Corollary \ref{Corollary} in Appendix \ref{Appendix})
\begin{equation}\label{Dopt1}
    U_k\, D_0^{(k)}\, U_k^{\dagger}=Z\otimes \mathbbm{1}_k'
\end{equation}
and
\begin{equation}\label{Dopt2}
    U_k\, D_1^{(k)}\, U_k^{\dagger}=X\otimes P_1^{(k)}+X^T\otimes P_2^{(k)}.
\end{equation}

\ \\

\noindent\textbf{The state.} Having determined the form of all local observables we can now move on to proving the self-testing statement for the state. After substituting the above observables, the "rotated" Bell operator corresponding to the Bell inequality which is maximally violated can be expressed as  
\begin{equation}\label{BigBell}
    U\,\mathcal{B}_{\mathcal{G}}\,U^{\dagger}=\sum_{m_1,\ldots,m_N=0}^1 B_{\boldsymbol{m}}\otimes P_{m_1}^{(1)}\otimes\ldots\otimes P_{m_N}^{(N)},
\end{equation}
where $U=U_1\otimes\ldots\otimes U_N$ and $P_{m_i}^{(i)}$ are projections introduced above that satisfy $P^{(i)}_1P^{(i)}_2=0$ for any site $i=1,\ldots,N$, $B_{\boldsymbol{m}}$
with $\boldsymbol{m}:=m_1\ldots m_N$, where $m_i=0,1$, are $N$-qutrit Bell operators obtained from 
%
%
\begin{equation}\label{smallBell}
    B=\mathcal{G}_{1,0}
 +\sum_{l=1}^{2}c_{1,l}\,\mathcal{G}_{1,l}
+\sum_{l=3}^{N_1+1}c_{2,l}\,\mathcal{G}_{2,l}+
    \sum_{l=N_1+2}^N\mathcal{G}_{3,l}+\mathrm{h.c.},
\end{equation}
through the application of the identity map ($m_i=0$) or the transposition map ($m_i=1$) to the observables appearing at site $i$.
%
%
Here, $\mathcal{G}_{a,b}$ are the stabilizing operators 
of the graph state $\ket{G}$ defined in 
Eqs. (\ref{StabOp2}) for $n=1$ and $d=3$, which for completeness we restate here as
\begin{equation}\label{GOp1}
    \mathcal{G}_{1,0}=X_1\otimes Z_2\otimes\bigotimes_{i=3}^{N_1+1}Z_i^{r_{1,i}},
\end{equation}
\begin{equation}\label{GOp2}
    \mathcal{G}_{1,k}=\left(XZ^k\right)_1
    \otimes \left(ZX^k\right)_2\otimes\bigotimes_{i=3}^{N_1+1}Z_i^{r_{1,i}+kr_{2,i}}\otimes\bigotimes_{i=N_1+2}^{N}Z_i^{kr_{2,i}},
\end{equation}
with $k=1,2$,
\begin{equation}\label{GOp3}
    \mathcal{G}_{2,k}=(XZ^{r_{1,k}})_1
    \otimes Z_2^{r_{1,k}+r_{2,k}}\otimes\bigotimes_{i=3}^{k-1}
    Z^{r_{1,i}+r_{k,i}}_i\otimes\left(Z^{r_{1,k}}X\right)_k\otimes\bigotimes_{i=k+1}^{N_1+1}
    Z^{r_{1,i}+r_{k,i}}_i\otimes\bigotimes_{i=N_2+2}^{N}Z^{r_{k,i}}_i
\end{equation}
with $k=3,\ldots,N_1+1$
\begin{equation}\label{GOp4}
    \mathcal{G}_{3,k}=
    Z^{r_{2,k}}_2\otimes\bigotimes_{i=3}^{N_1+1}
    Z^{r_{k,i}}_i\otimes\bigotimes_{i=N_1+2}^{k-1}Z_i^{r_{k,i}}\otimes
    X_k\otimes\bigotimes_{i=k+1}^{N}Z_i^{r_{k,i}},
\end{equation}
with $k=N_1+2,\ldots,N$. The subscripts were added to $X$ and $Z$ to denote the site at which these operators act; recall also that we fixed $r_{1,2}=1$.

The formula (\ref{BigBell}) takes into account the fact that at each site we have two choices of measurements, with and without the transposition. Thus, the Bell operator is composed of $2^N$ $N$-qutrit Bell operators. 
For instance, for $m_1=\ldots=m_N=0$ no partial transposition is applied to $B$ and therefore $B_{0\ldots 0}\equiv B$, whereas for $m_1=\ldots=m_N=1$ the partial transposition is applied to every site and hence $B_{1\ldots 1}=B^T$, where $T$ stands for the global transposition. 

In order to find the form of the state maximally violating our inequality we now determine the eigenvector(s) of the Bell operator $\mathcal{B}_{\mathcal{G}}$ corresponding its maximal eigenvalue which is $2(N-N_1+d-1)$ [cf. Eq. (\ref{MaxQVal})]. 
To this end, let us focus on the $N$-qutrit operators $B_{\boldsymbol{m}}$ and prove that the latter number is an eigenvalue of only two of them, $B$ and $B^T$, which correspond to the cases $m_1=m_2=\ldots=m_N=0,1$, whereas the eigenvalues of the remaining operators are all lower. 

Clearly, $B$ is composed of the stabilizing operators of the graph state $\ket{G}$ and therefore its maximal eigenvalue conincides with the maximal quantum violation of the inequality which is $2(N-N_1+d-1)$. The same applies to $B^T$ because the transposition does not change the eigenvalues and the graph state is real.

Let us then move on to the remaining cases, i.e., $m_i$ are not all equal. We will show that in all those 
$2^{N}-2$ cases the $B_{\boldsymbol{m}}$ operators
have eigenvalues lower than $2(N-N_1+2)$ because for all those cases one can pick a few stabilizing operators $\mathcal{G}_{a,b}$ whose partial transpositions cannot stabilize a common pure state anymore. For further benefits let us denote by $\mathcal{G}_{a,b}^{\boldsymbol{m}}$ the stabilizing operators which are partially transposed with respect to those subsystems $i$ for which $m_i=1$. We divide the proof into three parts corresponding to three cases: (i) $m_1=m_2=0$, (ii)$ m_1=m_2=1$ and (iii) $m_1=0$, $m_2=1$ or $m_1=1$, $m_2=0$, and also a few sub-cases. 
\begin{itemize}
    \item The first one assumes that either $m_1=1$ and $m_2=0$ or $m_1=0$ and $m_2=1$, i.e., we take the transposed observables at the first or the second site, but not both at the same time. For simplicity let us then fix $m_1=1$ and $m_2=0$. We consider three operators $\mathcal{G}_{1,i}^{T_1}$ with $i=0,1,2$, where $T_1$ is the transposition applied to the observables at the first site.  It is not difficult to observe that using the explicit forms of the
    stabilizing operators [cf. Eqs. (\ref{GOp1}) and (\ref{GOp2})] and including the transposition at the first site, one obtains
    \begin{equation}
        \mathcal{G}_{1,0}^{T_1}\mathcal{G}_{1,1}^{T_1}\mathcal{G}_{1,2}^{T_1}=[X^T(XZ)^T(XZ^2)^T]_1\otimes [Z\,ZX\,ZX^2]_2,
    \end{equation}
    where we also used the fact that the products of the observables at the remaining sites amounts to identity. Using then the fact that $ZX=\omega XZ$, the above simplifies to
    \begin{equation}
       \mathcal{G}_{1,0}^{T_1}\mathcal{G}_{1,1}^{T_1}\mathcal{G}_{1,2}^{T_1}=\omega \mathbbm{1}.
    \end{equation}
    This simple fact precludes that there exists a common eigenvector of 
    $\mathcal{G}_{1,i}^{T_1}$ $(i=1,2,3)$ with eigenvalue one.
    
    \item Next, we consider the case when the observables at the first two sites are not transposed, i.e., $m_1=m_2=0$. There thus exists $i\neq 1,2$ such that $m_i=1$. Let us first assume that this particular vertex belongs to $i\in\{3,\ldots,N_1+1\}$, i.e., we take the transposed observables for this site. We then consider two operators $\mathcal{G}_{1,0}$ and $\mathcal{G}_{2,i}$. Notice then that the first of these operators has the $Z$ observable at site $i$ because $i\in \mathcal{N}_1$, i.e., it is connected to the first vertex, whereas the second one has $Z^{r_{1,i}}X$ at this position. At the remaining positions different than the first two they have only $Z$ observable or the identity which do not feel the action of transposition. All this means that in this case 
    $\mathcal{G}_{1,0}^{\boldsymbol{m}}=\mathcal{G}_{1,0}$ and 
    $\mathcal{G}_{2,i}^{\boldsymbol{m}}=\mathcal{G}_{2,i}^{T_i}$. Due to the fact that the transposition at site $i$ modifies
    $X$ appearing in $\mathcal{G}_{2,i}$ to $X^{\dagger}$, the operators 
    $\mathcal{G}_{1,0}$ and $\mathcal{G}_{2,i}^{T_i}$ do not commute (recall that by the very definition the stabilizing operators without the transposition commute). By virtue of Fact \ref{fact:comut} stated in Appendix \ref{App0} this implies that $\mathcal{G}_{1,0}$ and 
    $\mathcal{G}_{2,i}^{T_i}$ do not stabilize a common pure state.\\

    Let us now move on to the second sub-case in which $m_i=1$ for any $i\in \mathcal{N}_1$ and there exist $i\in\{N_1+2,\ldots,N\}$ such that $m_i=2$. Since the graph is connected there exist another vertex $j\neq 1,i$ which is connected to $i$. Analogously to the previous case, we consider two operators: $\mathcal{G}_{3,i}^{\boldsymbol{m}}$ and one of $\mathcal{G}_{a,b}^{\boldsymbol{m}}$, where the choice of the latter operator is dictated by the choice of the vertex $j$ which $i$ is connected to: for $j=2$ we take $\mathcal{G}_{1,1}^{\boldsymbol{m}}$; for $j\in\{3,\ldots,N_1+1\}$ we take $\mathcal{G}_{2,j}^{\boldsymbol{m}}$; finally, for $j\in\{N_1+2,\ldots,N\}$ we take $\mathcal{G}_{3,j}^{\boldsymbol{m}}$. 
    
    Now, $\mathcal{G}_{3,i}^{\boldsymbol{m}}$ has the $X$ operator at site $i$ and the $Z$ operator at the remaining "$D$" sites, whereas all the other operators $\mathcal{G}_{a,b}^{\boldsymbol{m}}$ for $a=1,2,3$ and $b\neq i$ listed above have only either the $Z$ operator or the identity at all "$D$" sites. Thus, $\mathcal{G}_{a,b}^{\boldsymbol{m}}=\mathcal{G}_{a,b}$ for any $a=1,2,3$ and $b\neq i$ and any sequence $\boldsymbol{m}$ in which $m_l=1$ for $l=1,\ldots,N_1+1$, and   $\mathcal{G}_{3,i}^{\boldsymbol{m}}=\mathcal{G}_{3,i}^{T_i}$. Now, it clearly follows that $\mathcal{G}_{3,i}^{\boldsymbol{m}}$ does not commute with the chosen $\mathcal{G}_{a,b}$ because the transposition at site $i$ changes the $X$ operator to $X^2$ and because, by the very definition, 
    $\mathcal{G}_{3,i}$ (without the transposition) commutes with any other $\mathcal{G}_{a,b}$. As before this implies that 
    $\mathcal{G}_{3,i}^{\boldsymbol{m}}\mathcal{G}_{a,b}=\omega^q\mathcal{G}_{a,b}\mathcal{G}_{3,i}^{\boldsymbol{m}}$ for some $q=1,2$, and therefore these two operators cannot stabilize a common pure state [cf. Fact \ref{fact:comut} in Appendix \ref{App0}].
    
    \item The last case to consider is when $m_1=m_2=1$; the remaining $m_i$ can take arbitrary values except for being all equal to one, which corresponds to the already-considered case of all observables being transposed. Here we can use the fact that $\mathcal{G}_{a,b}^{\boldsymbol{m}}$ for all $a,b$ stabilize the graph state $\ket{G}$ if and only if  $[\mathcal{G}_{a,b}^{\boldsymbol{m}}]^{T}$ does, where $T$ is the global transposition. We can thus apply the global transposition to all the operators $\mathcal{G}_{a,b}^{\boldsymbol{m}}$ and consider again the case when $m_0=m_2=0$ and there is some $i\neq 1,2$ such that $m_i = 1$, which has already been considered above.
\end{itemize}

Knowing that among all the $B_{\boldsymbol{m}}$
operators only $B$ and $B^T$ give rise to the maximal quantum violation of the Bell inequality corresponding to the considered graph, 
we can determine the form of the state $\ket{\psi}$ maximally violating the inequality. Due to the fact that each local Hilbert space decomposes as $\mathcal{H}_k=\mathbbm{C}^3\otimes \mathcal{H}_k'$ we can write the state as
\begin{equation}\label{state}
   \ket{\psi}=\sum_{i_1,\ldots,i_N}\ket{\psi_{i_1,\ldots,i_N}}\otimes\ket{i_1}_{1}\otimes\ldots\otimes\ket{i_N}_N,
\end{equation}
where $\ket{\psi'}=(U_1\otimes\ldots\otimes U_N)\ket{\psi}$, $\ket{\psi_{i_1,\ldots,i_N}}$ are some vectors from $(\mathbbm{C}^3)^{\otimes N}$ and the local bases $\ket{i_k}$ are the eigenbases of the projectors $P^{(k)}_{m_k}$. The fact that $\ket{\psi}$ achieves the maximal quantum value of the inequality, $\beta_Q=2(N-N_1+2)$, means that the following identity
\begin{equation}\label{rownanie}
    \mathcal{B}_{\mathcal{G}}\ket{\psi}=2(N-N_1+2)\ket{\psi}
\end{equation}
holds true. Plugging Eqs. (\ref{state}) and (\ref{BigBell}) into the above equation one finds that it is satisfied iff for every sequence
$\boldsymbol{m}$,
\begin{equation}\label{rownanie2}
    B_{\boldsymbol{m}}\ket{\psi_{i_1,\ldots,i_N}}=2(N-N_1+2)\ket{\psi_{i_1,\ldots,i_N}},
\end{equation}
holds true for all those sequences $i_1,\ldots,i_N$ for which 
the local vectors $\ket{i}_k$ at site $k$ are the eigenvectors of the operator $P_{m_k}^{(k)}$. As already discussed above, this condition can be met for only two of these operators, $B$ and $B^{T}$. Moreover, the stabilizing operators that $B$ (and thus also $B^{T}$) are composed of stabilize a unique state, which is the graph state $\ket{G}$. Consequently,  $\ket{\psi_{i_1,\ldots,i_N}}=\ket{G}$ for any sequence $i_1,\ldots,i_N$ for which the corresponding local 
vectors are the eigenvectors of $P_{0}^{(k)}$ (or $P_{1}^{(k)}$ in the case of $B^T$).

On the other hand, we showed that the eigenvalues of the remaining operators $B_{\boldsymbol{m}}$ are lower than the maximal violation of the Bell inequality and thus in all those cases Eq. (\ref{rownanie2}) can be satisfied iff the corresponding vectors vanish, $\ket{\psi_{i_1,\ldots,i_N}}=0$.
Taking all this into account, we conclude that the state 
$\ket{\psi'}$ has the following form
\begin{equation}
    (U_1\otimes\ldots\otimes U_N)\ket{\psi}=\ket{\psi_{\mathcal{G}}}\otimes\ket{\varphi},
\end{equation}
where $\ket{\varphi}$ is some state from the auxiliary Hilbert spaces $\mathcal{H}_1'\otimes\ldots\otimes\mathcal{H}_N'$ that satisfies 
\begin{equation}
 \left(P_{i}^{(1)}\otimes\ldots\otimes P_{i}^{(N)}\right)\ket{\varphi}=\ket{\varphi}\qquad (i=0,1).
\end{equation}
This completes the proof.
\end{proof}

\section{Conclusions and outlook}
\label{Sec:Conclusion}

In this work we introduced a family of Bell expressions whose maximal quantum values are achieved by graph states of arbitrary prime local dimension. While at the moment we are unable to compute their maximal classical values, we believe the corresponding Bell inequalities are all nontrivial. This belief is supported by a few examples of Bell inequalities for which the classical bound was found numerically, and the fact that in the particular case of qutrit states they enable self-testing of all graph states. We thus introduced a broad class of Bell inequalities that can be used for testing non-locality of many interesting and relevant multipartite states, including the absolutely maximally entangled states. Moreover, in the particular case of many-qutrit systems our inequalities can also be employed to self-test the graph states, in particular the four-qutrit absolutely maximally entangled state.

There is a few possible directions for further research that are inspired by our work:
\begin{itemize}
    \item First of all, as far as implementations of self-testing are concerned it is a problem of a high relevance to understand how robust our self-testing statements are against noises and experimental imperfections.
    
    \item Another possible direction that is related to the possibility of experimental implementations of self-testing is to find Bell inequalities maximally violated by graph states that require performing the minimal number of two measurement per observer to self-test the state. For instance, for the GHZ state such a Bell inequality \cite{Augusiak2019} and a self-testing scheme \cite{sarkar2019self} based on the maximal violation of this inequality were introduced recently; this inequality is based, however, on a slightly different construction which is not directly related to the stabilizer formalism used by us here.

\item Third, it is interesting to explore whether one can derive 
self-testing statements based on the maximal violation of our inequalities for higher prime dimensions $d\geq 0$. While it is already known (cf. Ref. \cite{kaniewski2018maximal}) that these inequalities do not serve the purpose as far as quantum observables are concerned because there exist many different choices of them that are not unitarily equivalent, whether they enable self-testing of graph states remains open. In other words, it is unclear whether the given graph state 
is the only one (up to the above equivalences) that meets the necessary and sufficient conditions for the maximal quantum violation of the corresponding Bell inequality stemming from the sum-of-squares decomposition.

\item The fourth possible direction is to generalize our construction so that it allows for designing Bell inequalities that are maximally violated by other classes of states such as for instance the hyper-graph states \cite{Rossi_2013} (see also Ref. \cite{PhysRevLett.116.070401} in this context).

\item Last but not least, one can also explore the possibility of self-testing of genuinely entangled subspaces within the stabilizer formalism in Hilbert spaces of arbitrary prime local dimension along the lines of Refs. \cite{Subspaces1,Subspaces2}.

\end{itemize}

\section*{Acknowledgments}

We acknowledge the VERIqTAS project funded within the QuantERA II Programme that has received funding from the European Union’s Horizon 2020 research and innovation programme under Grant Agreement No 101017733 and the Polish National Science Center. F. B. is supported by the Alexander von Humboldt foundation.

\appendix

\section{A few facts}
\label{App0}

\begin{fakt}\label{fact:unitary}
    Consider the generalized Pauli matrices defined through the following formulas
\begin{equation}
    X\ket{i}=\ket{i+1},\qquad Z\ket{i}=\omega^i\ket{i},
\end{equation}    
where $\ket{i}$ $(i=0,\ldots,d-1)$ are the elements of the standard basis of $\mathbbm{C}^d$. There are no complex numbers $\alpha,\beta\neq 0$ for which $\alpha X+\beta Z$ is unitary. 
\end{fakt}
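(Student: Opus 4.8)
The plan is to exploit the fact that the $d^{2}$ Heisenberg--Weyl operators $\{X^{a}Z^{b}:a,b=0,\ldots,d-1\}$ are linearly independent in the space of $d\times d$ matrices, so that any operator identity involving only a handful of them forces the relevant coefficients to vanish. Throughout I would take $d\ge 3$: the statement is genuinely false at $d=2$ (there $(X+Z)/\sqrt{2}$ is unitary), and the argument below makes transparent precisely where $d=2$ escapes.

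First I would assume, towards a contradiction, that $O=\alpha X+\beta Z$ is unitary for some $\alpha,\beta\neq 0$, and expand $OO^{\dagger}=\mathbbm{1}$. Since $X$ and $Z$ are unitary with $X^{\dagger}=X^{d-1}$ and $Z^{\dagger}=Z^{d-1}$, this reads
\[
(|\alpha|^{2}+|\beta|^{2})\,\mathbbm{1}+\alpha\overline{\beta}\,XZ^{d-1}+\overline{\alpha}\beta\,ZX^{d-1}=\mathbbm{1}.
\]
Next I would bring the two cross terms into the ordered form $X^{a}Z^{b}$ using $XZ=\omega^{-1}ZX$, equivalently $ZX=\omega XZ$: the first is already $X^{1}Z^{d-1}$, while pushing the single $Z$ in $ZX^{d-1}$ to the right gives $ZX^{d-1}=\omega^{d-1}X^{d-1}Z=\omega^{-1}X^{d-1}Z^{1}$. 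Hence the identity becomes
\[
(|\alpha|^{2}+|\beta|^{2}-1)\,\mathbbm{1}+\alpha\overline{\beta}\,X^{1}Z^{d-1}+\omega^{-1}\overline{\alpha}\beta\,X^{d-1}Z^{1}=0 .
\]

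The decisive observation is that for $d\ge 3$ the three index pairs $(0,0)$, $(1,d-1)$ and $(d-1,1)$ are pairwise distinct in $\mathbb{Z}_{d}\times\mathbb{Z}_{d}$ --- in particular $(1,d-1)=(d-1,1)$ would force $d-1\equiv 1\pmod d$, i.e.\ $d=2$ --- so $\mathbbm{1}$, $X^{1}Z^{d-1}$ and $X^{d-1}Z^{1}$ are three distinct, hence linearly independent, Heisenberg--Weyl operators. Consequently every coefficient in the last display must vanish; in particular $\alpha\overline{\beta}=0$, contradicting $\alpha,\beta\neq 0$. (If one prefers not to invoke linear independence directly, the same conclusion follows by taking the Hilbert--Schmidt inner product of the identity with $XZ^{d-1}$ and using $\mathrm{Tr}\!\left[(X^{a}Z^{b})^{\dagger}X^{a'}Z^{b'}\right]=d\,\delta_{aa'}\delta_{bb'}$, which isolates $\alpha\overline{\beta}\,d=0$ at once.) I do not anticipate a real obstacle; the only points requiring care are the $\omega$-phase bookkeeping when reordering $ZX^{d-1}$ into $X^{d-1}Z$, and restricting to $d\ge 3$, which is exactly the regime in which the Fact holds and in which it is used in the main text.
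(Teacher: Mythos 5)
Your proposal is correct and follows essentially the same route as the paper: expand the unitarity condition and conclude that the coefficients of the cross terms must vanish because the operators involved are linearly independent for $d\geq 3$. The only difference is how independence is justified---you order the cross terms as distinct Heisenberg--Weyl operators and invoke their Hilbert--Schmidt orthogonality, whereas the paper shows directly that $X^{\dagger}Z$ and $Z^{\dagger}X$ cannot be proportional via the commutation relation; your version is, if anything, slightly tidier in that it explicitly includes $\mathbbm{1}$ in the linearly independent set, a point the paper passes over quickly.
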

\begin{proof}The proof is elementary. We first expand
\begin{equation}\label{TintoVerano}
    (\alpha X+\beta Z)^{\dagger}(\alpha X+\beta Z)=(|\alpha|^2+|\beta|^2)\mathbbm{1}+\alpha^{*}\beta X^{\dagger}Z+\beta^{*}\alpha Z^{\dagger}X.
\end{equation}
Let us then show that for any $d\geq 3$, the operators $X^{\dagger}Z$ and $Z^{\dagger}X$ are linearly independent. To this end, we assume that 
$X^{\dagger}Z$ and $Z^{\dagger}X$ are linearly dependent and thus $X^{\dagger}Z=\eta Z^{\dagger}X$ for some 
$\eta\in\mathbbm{C}$. By using the fact that $ZX=\omega XZ$, we can rewrite this equation as $X^{\dagger}Z=\eta \omega^{d-1}XZ^{\dagger}$, which, taken into account the fact that $X$ and $Z$ are unitary further rewrites as $Z^2=\eta\omega^{d-1}X^2$ which for $d\geq 3$ is satisfied iff $\eta=0$. 

It now follows that the expression (\ref{TintoVerano}) equals $\mathbbm{1}$ if and only if $\alpha$ or $\beta$ vanishes. 
This completes the proof. 
\end{proof}
Let us notice that the above fact fails to be true for $d=2$ because in this case $ZX=-XZ$ and therefore $XZ$ and $ZX$ are linearly dependent, which makes it possible to find $\alpha,\beta$
such that $\alpha X+\beta Z$ is unitary. In fact, any pair of real positive numbers obeying $\alpha^2+\beta^2=1$ makes this matrix unitary.

Let us finally provide a proof of the properties (\ref{prop1}) and (\ref{prop2}). For this purpose we recall $\widetilde{A}^{(n)}_k$ to be given by
\begin{equation}\label{dupa2}
\widetilde{A}^{(n)}_k:= \frac{\omega^{-nk(k+1)}}{\sqrt{d}\,\lambda_n} \sum^{d-1}_{t=0} \w^{-ntk}  A_{t}^n,
\end{equation}
where $A_t$ are unitary observables.

\begin{fakt}\label{fact:properties}
Consider the following matrices
\begin{equation}\label{dupa2}
\widetilde{A}^{(n)}_k:= \frac{\omega^{-nk(k+1)}}{\sqrt{d}\,\lambda_n} \sum^{d-1}_{t=0} \w^{-ntk}  A_{t}^n,
\end{equation}
where $A_t$ are unitary observables.
For any $n=0,\ldots,d-1$, the following identity holds true:
\begin{equation}\label{app:prop2}
\sum^{d-1}_{k=0}  \widetilde{A}^{(d-n)}_k \widetilde{A}^{(n)}_k =\sum^{d-1}_{k=0}  \left[\widetilde{A}^{(n)}_k\right]^{\dagger} \widetilde{A}^{(n)}_k = d \I.  
\end{equation}
\end{fakt}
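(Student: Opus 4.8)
The plan is to establish the identity by directly substituting the definition of $\widetilde{A}^{(n)}_k$ [Eq.~(\ref{assign1})] and then collapsing the sum over $k$ via a root-of-unity argument. Two elementary inputs are needed. First, $|\lambda_n|=1$ for every $n$, which is immediate from Eqs.~(\ref{lambdan})--(\ref{lambdan3}): $\varepsilon_d\in\{1,\mathbbm{i}\}$, the Legendre symbol equals $\pm1$, and $\omega^{-g(n,d)/48}$ has unit modulus; hence $\lambda_n^*=\lambda_n^{-1}$. Second, $\lambda_{d-n}=\lambda_n^*$, which follows from the properties of the $\overline{O}^{(n)}_x$ recalled above (in particular $\overline{O}^{(n)}_x=[\overline{O}_x]^n$ with $\overline{O}_x$ unitary, so that $\overline{O}^{(d-n)}_x=[\overline{O}_x]^{-n}=[\overline{O}^{(n)}_x]^{\dagger}$): comparing this with the explicit form (\ref{At-m}) and using that the $O_k$ are unitary forces $\lambda_{d-n}=\lambda_n^*$. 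Substituting this into the definition and using $A_t^d=\mathbbm{1}$ (so that $A_t^{d-n}=(A_t^n)^{\dagger}$) then yields at once $\widetilde{A}^{(d-n)}_k=[\widetilde{A}^{(n)}_k]^{\dagger}$, which is relation (\ref{prop1}) and hence the first equality in (\ref{app:prop2}).

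For the main equality it remains to compute $\sum_{k=0}^{d-1}[\widetilde{A}^{(n)}_k]^{\dagger}\widetilde{A}^{(n)}_k$. The case $n=0$ is trivial since then $\widetilde{A}^{(0)}_k=\mathbbm{1}$ for all $k$. For $1\le n\le d-1$ I would substitute the definition and observe that the quadratic phase $\omega^{\mp nk(k+1)}$ carried by $\widetilde{A}^{(n)}_k$ and by $[\widetilde{A}^{(n)}_k]^{\dagger}$ cancels, leaving
\begin{equation*}
\sum_{k=0}^{d-1}\bigl[\widetilde{A}^{(n)}_k\bigr]^{\dagger}\widetilde{A}^{(n)}_k=\frac{1}{d\,|\lambda_n|^2}\sum_{s,t=0}^{d-1}(A_s^n)^{\dagger}A_t^n\sum_{k=0}^{d-1}\omega^{nk(s-t)} .
\end{equation*}
The inner geometric sum equals $d$ when $n(s-t)\equiv 0\pmod d$ and $0$ otherwise; here primality of $d$ enters, since for $1\le n\le d-1$ the integer $n$ is invertible modulo $d$, so $n(s-t)\equiv 0\pmod d$ is equivalent to $s=t$. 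Thus the double sum collapses to $\sum_{t=0}^{d-1}(A_t^n)^{\dagger}A_t^n=\sum_{t=0}^{d-1}\mathbbm{1}=d\,\mathbbm{1}$ by unitarity of the $A_t$, and combined with $|\lambda_n|=1$ this gives $\sum_k[\widetilde{A}^{(n)}_k]^{\dagger}\widetilde{A}^{(n)}_k=d\,\mathbbm{1}$, as claimed.

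I do not expect a genuine obstacle here: the computation is short, and the only points requiring care are (i) importing the two facts about $\lambda_n$ above --- unit modulus and $\lambda_{d-n}=\lambda_n^*$ --- for which I would refer to the construction of Ref.~\cite{kaniewski2018maximal}, and (ii) checking that the quadratic phases cancel so that the $k$-sum is an honest geometric series, which is exactly what the prefactor $\omega^{-nk(k+1)}$ in the definition of $\widetilde{A}^{(n)}_k$ is designed to guarantee. As a consistency check, applying the same manipulation directly to $\sum_k\widetilde{A}^{(d-n)}_k\widetilde{A}^{(n)}_k$ (using $A_t^{d-n}A_t^n=A_t^d=\mathbbm{1}$ and $\lambda_n\lambda_{d-n}=1$) again returns $d\,\mathbbm{1}$, in agreement with the first equality.
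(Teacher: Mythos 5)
Your proposal is correct and follows essentially the same route as the paper's proof: substitute the definition, note that the quadratic phases $\omega^{\mp nk(k+1)}$ cancel, use $|\lambda_n|^2=1$ together with $\sum_{k}\omega^{nk(s-t)}=d\,\delta_{s,t}$ (valid for $n$ invertible mod the prime $d$), and finish with unitarity of the $A_t$. Your extra remarks on $\lambda_{d-n}=\lambda_n^{*}$ and the invertibility of $n$ modulo $d$ simply make explicit steps the paper leaves implicit.
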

\begin{proof}After plugging Eq. (\ref{dupa2}) into Eq. (\ref{app:prop2}),
one obtains
\begin{equation}
    \sum^{d-1}_{k=0}  \widetilde{A}^{(d-n)}_k \widetilde{A}^{(n)}_k=\frac{1}{d|\lambda_n|^2}\sum_{s,t=0}^{d-1}\sum_{k=0}^{d-1}\omega^{nk(s-t)}A_{s}^{-n}A_{t}^{n}.
\end{equation}
Employing then the following identity
\begin{equation}
    \sum_{k=0}^{d-1}\omega^{nk(s-t)}=d\delta_{s,t}
\end{equation}
and the fact that $|\lambda_n|^2=1$ for any $n$,
one directly arrives at Eq. (\ref{app:prop2}), which completes the proof.
\end{proof}

\begin{fakt}\label{fact:comut}
Consider two $N$-qudit operators $S_1$ and $S_2$ which are
$N$-fold tensor products of $X^iZ^j$ with $i,j=0,\ldots,d-1$ with prime $d$. Assume also that $S_1^d=S_2^d=\mathbbm{1}$. If $[S_1,S_2]\neq 0$,
then they cannot stabilize a common pure state in; in other words, no nonzero $\ket{\psi}\in(\mathbbm{C}^d)^{\otimes N}$ exists such that $S_i\ket{\psi}=\ket{\psi}$ for $i=1,2$.
\end{fakt}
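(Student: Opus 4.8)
The plan is to exploit the Weyl (phase) commutation relations obeyed by tensor products of generalized Pauli matrices. First I would record the elementary fact that, because $ZX=\omega XZ$ on every single site, any two operators $S_1=\bigotimes_{i=1}^{N}X_i^{u_i}Z_i^{v_i}$ and $S_2=\bigotimes_{i=1}^{N}X_i^{u_i'}Z_i^{v_i'}$ satisfy
\begin{equation}
    S_1S_2=\omega^{c}\,S_2S_1,\qquad c=\sum_{i=1}^{N}\bigl(u_i'v_i-u_iv_i'\bigr)\ \mathrm{mod}\ d ,
\end{equation}
that is, they commute up to a phase which is a power of $\omega$; the integer $c$ is simply the symplectic pairing of the two label vectors. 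This is a direct computation, moving all $X$'s past all $Z$'s site by site using $Z^{v}X^{u}=\omega^{uv}X^{u}Z^{v}$, and it is the only structural input the proof needs.

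Second, I would observe that the hypothesis $[S_1,S_2]\neq 0$ is equivalent to $c\not\equiv 0\ (\mathrm{mod}\ d)$. Indeed, $S_1S_2-S_2S_1=(\omega^{c}-1)\,S_2S_1$, and since $S_2S_1$ is unitary (hence invertible) this vanishes precisely when $\omega^{c}=1$, i.e. when $d\mid c$. In the remaining case $c\in\{1,\ldots,d-1\}$, and since $d$ is prime $\omega^{c}\neq 1$.

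Third, I would argue by contradiction: suppose some nonzero $\ket{\psi}\in(\mathbbm{C}^d)^{\otimes N}$ satisfies $S_1\ket{\psi}=\ket{\psi}$ and $S_2\ket{\psi}=\ket{\psi}$. Applying $S_1S_2$ to $\ket{\psi}$ in two ways gives, on one hand, $S_1S_2\ket{\psi}=S_1\ket{\psi}=\ket{\psi}$, and on the other hand, using the commutation relation, $S_1S_2\ket{\psi}=\omega^{c}S_2S_1\ket{\psi}=\omega^{c}\ket{\psi}$. Hence $(1-\omega^{c})\ket{\psi}=0$, and since $\omega^{c}\neq 1$ we conclude $\ket{\psi}=0$, a contradiction. (The assumption $S_i^{d}=\mathbbm{1}$ is only there to make the stabilizing condition meaningful, i.e. to guarantee that each $S_i$ individually has $1$ in its spectrum; it is not used in the contradiction above.)

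There is essentially no serious obstacle here: the whole content is the phase-commutation relation, which is classical. The only point requiring a little care is to state that relation with the convention matching $XZ=\omega^{-1}ZX$ from the preliminaries, so that $c$ is genuinely the symplectic form of the two Pauli labels — but since we only need $c\not\equiv 0$, even the precise sign is immaterial.
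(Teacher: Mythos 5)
Your argument is correct and is essentially the paper's own proof: both derive the Weyl--Heisenberg phase relation $S_1S_2=\omega^{q}S_2S_1$ with $q\neq 0$ from noncommutativity and then apply both operators to a putative common stabilized state to force $\ket{\psi}=\omega^{q}\ket{\psi}$, hence $\ket{\psi}=0$. Your version is slightly more explicit (identifying the phase as the symplectic pairing and noting the unitarity/invertibility step), but there is no substantive difference.
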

\begin{proof}Let us first notice that the Weyl-Heisenberg matrices $W_{i,j}=X^iZ^j$
satisfy the following commutation relations $W_{i,j}W_{k,l}=\omega^{f(i,j,k,l)}W_{k,l}W_{i,j}$ with $f:\{0,\ldots,d-1\}^4\to \{0,\ldots,d-1\}$, and thus there exists $q=\{1,\ldots,d-1\}$ such that
\begin{equation}\label{comutrelation}
S_1S_2=\omega^q S_2S_1    \qquad (q=1,\ldots,d-1),
\end{equation}
where $q\neq 0$ due to the assumption that $S_i$ do not commute. 

Now, let us assume that $S_i$ stabilize a common pure state, $S_i\ket{\psi}=\ket{\psi}$ for $i=1,2$. Then, the relation 
(\ref{comutrelation}) implies $\ket{\psi}=\omega^q\ket{\psi}$ which is satisfied iff $\ket{\psi}=0$, which leads to a contradiction. This ends the proof.
\end{proof}

\section{Characterization of observables}
\label{Appendix}

The following proposition was proven in Appendix B of Ref. \cite{kaniewski2018maximal}.
\begin{fakt}\label{dupablada}
Let $R_0$ and $R_1$ acting on some finite-dimensional Hilbert space $\mathcal{B}$ be unitary operators satisfying $R_0^3=R_1^3=\mathbbm{1}$. If the anticommutator
$\{R_0,R_1\}$ is unitary, then $\mathcal{H}=\mathbbm{C}^3\otimes\mathcal{H}'$ for some Hilbert space $\mathcal{H}'$ and there exists a unitary $U:\mathcal{H}\to \mathbbm{C}^3\otimes\mathcal{H}'$ such that
\begin{eqnarray}\label{app:Bopi}
    UR_0U^{\dagger}&\!\!\!=\!\!\!&X\otimes Q+X\otimes Q^{\perp}=X\otimes\mathbbm{1}',\nonumber\\
    UR_1U^{\dagger}&\!\!\!=\!\!\!&X^2Z\otimes Q+Z^2\otimes Q^{\perp},
\end{eqnarray}
where $Q$ and $Q^{\perp}$ are orthogonal projections satisfying $Q+Q^{\perp}=\mathbbm{1}'$ and $\mathbbm{1}'$ stands for the identity acting on $\mathcal{H}'$.
\end{fakt}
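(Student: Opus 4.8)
The plan is to bring the pair $(R_0,R_1)$ to a Weyl--Heisenberg normal form by first resolving the global structure forced by unitarity of the anticommutator, and then appealing to uniqueness of the qutrit Heisenberg representation. First I would diagonalise $R_0$: since $R_0^3=\mathbbm{1}$ its spectrum lies in $\{1,\omega,\omega^2\}$, so $\mathcal{H}=H_0\oplus H_1\oplus H_2$ with $R_0=\sum_{j}\omega^{j}\Pi_j$, and I would write $R_1$ in the corresponding block form $B_{jk}=\Pi_j R_1\Pi_k$. In this basis the anticommutator has blocks $\{R_0,R_1\}_{jk}=(\omega^j+\omega^k)B_{jk}$, so the diagonal blocks are amplified by a factor $2$ in modulus while the off-diagonal ones keep unit modulus. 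Imposing $\{R_0,R_1\}^\dagger\{R_0,R_1\}=\mathbbm{1}$ and comparing its $(j,j)$ block with the corresponding block of $R_1^\dagger R_1=\mathbbm{1}$ I would obtain $3\,B_{jj}^\dagger B_{jj}=0$, hence $\Pi_j R_1\Pi_j=0$ for all $j$; the off-diagonal blocks then yield range-orthogonality relations of the type $B_{j'i}^\dagger B_{j'k}=0$. Equivalently, the vanishing of the diagonal blocks is the operator identity $R_0^2 R_1+R_0 R_1 R_0+R_1 R_0^2=0$, i.e. $\sum_{k}R_0^{k}R_1 R_0^{-k}=0$.

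Next I would split $R_1$ into its two components under conjugation by $R_0$, writing $R_1=S+T$ with $R_0 S R_0^{-1}=\omega S$ and $R_0 T R_0^{-1}=\omega^2 T$ (the $\mathbbm{1}$-isotypic component is absent precisely because of the identity above). These satisfy the Weyl relations $R_0 S=\omega S R_0$ and $R_0 T=\omega^2 T R_0$. Because $S$ and $T$ carry opposite gradings, expanding $R_1^3=\mathbbm{1}$ (equivalently $R_1^2=R_1^\dagger$, valid since $R_1$ is a unitary of order three) and matching gradings forces $S^2=S^\dagger$, $T^2=T^\dagger$ and $ST+TS=0$. The relation $S^2=S^\dagger$ makes $S$ normal with eigenvalues in $\{0,1,\omega,\omega^2\}$, so $E:=S^3=S^\dagger S=SS^\dagger$ is the orthogonal projection onto the support of $S$, on which $S$ is unitary with $S^3=\mathbbm{1}$; likewise $F:=T^3$. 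The range-orthogonality obtained above gives $S^\dagger T=0$, hence $EF=0$, and since $R_1=S+T$ is unitary with orthogonal ranges one has $E+F=\mathbbm{1}$. As $E$ and $F$ are block-diagonal with respect to the $R_0$-grading they commute with $R_0$, so $\mathcal{H}=E\mathcal{H}\oplus F\mathcal{H}$ reduces both operators, with $R_1|_{E\mathcal{H}}=S$ and $R_1|_{F\mathcal{H}}=T$.

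Finally, on each summand $(R_0,R_1)$ is a pair of order-three unitaries commuting up to a fixed phase, i.e. a representation of the qutrit Heisenberg--Weyl group with a fixed central character ($\omega$ on $E\mathcal{H}$, $\omega^2$ on $F\mathcal{H}$). By the Stone--von Neumann theorem such a representation is a multiple of the unique three-dimensional irreducible one, identifying $E\mathcal{H}\cong\mathbbm{C}^3\otimes\mathcal{H}'_{E}$ and $F\mathcal{H}\cong\mathbbm{C}^3\otimes\mathcal{H}'_{F}$ with $R_0\mapsto X\otimes\mathbbm{1}'$ on both. The second generator then has the form $Z^2 X^{m}$ (resp. $X^2 Z\,X^{m}$), and a further conjugation by a suitable function $f(R_0)$ — which commutes with $R_0$ and hence preserves $R_0\mapsto X$ — removes the residual $X^{m}$, bringing $S$ to exactly $Z^2\otimes\mathbbm{1}'$ and $T$ to $X^2 Z\otimes\mathbbm{1}'$; one checks directly that $X Z^2 X^{-1}=\omega Z^2$ and $X(X^2 Z)X^{-1}=\omega^2 X^2 Z$, matching the two central characters. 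Taking $Q$ to be the projector onto the $\omega^2$-summand and $Q^\perp$ onto the $\omega$-summand then assembles the pieces into the claimed global form $U R_0 U^\dagger=X\otimes\mathbbm{1}'$ and $U R_1 U^\dagger=X^2 Z\otimes Q+Z^2\otimes Q^\perp$.

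The main obstacle I anticipate is the structural middle step: showing that the two graded components $S$ and $T$ of $R_1$ are partial isometries living on orthogonal, complementary subspaces that simultaneously reduce $R_0$. This is where the hypotheses must be used in full — vanishing of the diagonal blocks and the range-orthogonality use unitarity of the anticommutator, while $S^2=S^\dagger$ (and hence the projection property of $E=S^3$) uses $R_1^3=\mathbbm{1}$ together with unitarity of $R_1$. A secondary subtlety is the gauge clean-up inside the Stone--von Neumann step: the commutation phase alone does not pin down the second generator, and one must conjugate by an element of the commutant of $R_0$ (a function of $X$) to reach exactly the operators $Z^2$ and $X^2 Z$ appearing in the statement.
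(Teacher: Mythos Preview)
The paper does not actually prove this fact: it states it in Appendix~\ref{Appendix} and simply refers to Appendix~B of Ref.~\cite{kaniewski2018maximal} for the argument. So there is no in-paper proof to compare against.

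On its own merits your outline is essentially correct and follows the natural route. The crucial step --- that unitarity of $\{R_0,R_1\}$ combined with unitarity of $R_1$ kills the $R_0$-diagonal blocks of $R_1$ (your computation $3B_{jj}^\dagger B_{jj}=0$), so that $R_1=S+T$ splits into pure $\omega$- and $\omega^2$-graded pieces under conjugation by $R_0$ --- is exactly the heart of the matter. Your derivation of $S^2=S^\dagger$, $T^2=T^\dagger$, $\{S,T\}=0$ from $R_1^2=R_1^\dagger$ by matching gradings is clean, and the block-orthogonality $B_{ji}^\dagger B_{jk}=0$ (which already follows from unitarity of $R_1$ once the diagonal blocks vanish) indeed yields $S^\dagger T=0$, hence $E=S^3$ and $F=T^3$ are complementary orthogonal projections commuting with $R_0$. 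Reducing to two finite Heisenberg--Weyl representations with opposite central characters is then the right move.

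One small imprecision in the last paragraph: if you invoke the full Stone--von Neumann theorem, both generators are fixed simultaneously (e.g.\ $R_0\mapsto X\otimes\mathbbm{1}$ and $S\mapsto Z^2\otimes\mathbbm{1}$ on the $\omega$-block, $T\mapsto Z\otimes\mathbbm{1}$ on the $\omega^2$-block), so there is no ``residual $X^m$'' to remove. What \emph{is} required is to pass, on the $\omega^2$-block, from the Stone--von Neumann normal form $Z$ to the specific operator $X^2Z$ appearing in the target statement. This is still a conjugation by a unitary polynomial in $X$ (hence commuting with $R_0$); concretely, one checks that a suitable $f(X)=c_0(\mathbbm{1}+\omega X+X^2)$ with $|c_0|=1/\sqrt{3}$ is unitary and satisfies $f(X)\,Z\,f(X)^{-1}=X^2Z$. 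So your instinct that the clean-up lives in the commutant of $R_0$ is right, even if the description of what is being cleaned up is slightly off.
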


Based on the above fact let us now show demonstrate that for each of the subsets of observables $A_x$, $B_y$, $C_z^{(i)}$ and $D_{w}^{(i)}$
there exist local unitary operations bringing them to 
the forms used in Eqs. \eqref{similar}, \eqref{Bopt2}, \eqref{Copt1} and \eqref{Copt2}, and finally, 
\eqref{Dopt1} and \eqref{Dopt2}.

%
\begin{cor}\label{Corollary}
The following statements can be verified by a direct check: 
\begin{itemize}
    \item \textbf{($A_x$ observables)} By using $\mathcal{U}_1=F^{\dagger}V_1 F\otimes Q_1+F^{\dagger}V_1^* V_2 F\otimes Q_2$, where 
    $F$, $V_1$ and $V_2$ are unitary operations given by 
    \begin{equation}
    F=
    \frac{1}{\sqrt{3}}\left(
    \begin{array}{ccc}
        1 &  1  & 1 \\
        1 &  \omega  & \omega^2  \\
        1 &   \omega^2  & \omega
    \end{array}
    \right), \qquad V_1=
    \left(\begin{array}{ccc}
        1 &  0  & 0 \\
        0 &  1  & 0  \\
        0 &   0  & \omega
    \end{array}\right),\qquad V_2=
    \left(\begin{array}{ccc}
        1 &  0  & 0 \\
        0 &  0  & 1  \\
        0 &   1  & 0
    \end{array}\right),
\end{equation}
one can bring the observables in Eq. (\ref{app:Bopi}) into the following form used in Eq. \eqref{similar}:
\begin{eqnarray}
      &&X\otimes Q+X^T\otimes Q^{\perp},\nonumber\\
      &&XZ\otimes Q+(XZ)^T\otimes Q^{\perp}.
\end{eqnarray}

\item \textbf{($B_y$ and $C_z^{(i)}$ observables)} By using $\mathcal{U}_2=V_3 F\otimes Q+(V_1V_3)^*F\otimes Q^{\perp}$, and relabeling $Q\leftrightarrow Q^{\perp}$ one brings the observables (\ref{app:Bopi}) to those in Eq. \eqref{Bopt2}, that is,
\begin{eqnarray}\label{Bobs}
      &&Z\otimes Q+Z\otimes Q^{\perp}=Z\otimes\mathbbm{1},\nonumber\\
      &&ZX\otimes Q+(ZX)^T\otimes Q^{\perp},
\end{eqnarray}
where 
\begin{equation}
    V_3=
    \left(\begin{array}{ccc}
        1 &  0  & 0 \\
        0 &  \omega^2  & 0  \\
        0 &   0  & \omega^2
    \end{array}\right).
\end{equation}
Then, by applying $\mathcal{U}_3=(V_1V_3)^*\otimes Q+V_2'\otimes Q^{\perp}$ operation we can bring (\ref{Bobs}) to 
\begin{eqnarray}\label{Charlie}
      &&Z\otimes Q+Z\otimes Q^{}=Z\otimes\mathbbm{1},\nonumber\\
      &&Z^2X\otimes Q+(Z^2X)^T\otimes Q^{\perp}.
\end{eqnarray}
Depending on the value of $r_{1,k}\neq 0$, both \eqref{Bobs} and \eqref{Charlie} are used in Eqs. \eqref{Copt1} and \eqref{Copt2}.

\item \textbf{($D_w^{(i)}$ observables)} By applying $\mathcal{U}_4=V_1V_3\otimes Q+(V_1V_3)^*\otimes Q^{\perp} $ to the above observables (\ref{Bobs}) one can 
bring them to the following form
\begin{eqnarray}
      &&Z\otimes Q+Z\otimes Q^{\perp}=Z\otimes\mathbbm{1},\nonumber\\
      &&X\otimes Q+X^T\otimes Q^{\perp},
\end{eqnarray}
which is used in Eqs. \eqref{Dopt1} and \eqref{Dopt2}.
\end{itemize}
\end{cor}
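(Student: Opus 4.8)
The statement is, as it declares, a sequence of explicit block conjugations, so the plan is simply to organise that verification cleanly. The common starting point for every item is the pair produced by Fact~\ref{dupablada}: after the unitary of that fact one has $R_0=X\otimes\mathbbm{1}'$ and $R_1=X^2Z\otimes Q+Z^2\otimes Q^{\perp}$, so on the range of $Q$ the two generators are $(X,X^2Z)$ and on the range of $Q^{\perp}$ they are $(X,Z^2)$. Because each of $\mathcal{U}_1,\ldots,\mathcal{U}_4$ is block diagonal with respect to the splitting $Q\oplus Q^{\perp}$, every claimed identity decouples into two independent single-qutrit computations, one per block, and it suffices to treat each block separately and then reassemble. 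The third observable at any three-setting party is never verified on its own: it is fixed by an anticommutation relation already proved in the main text (such as \eqref{rel1} for Alice), so matching the images of the first two generators is enough.

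First I would assemble the small algebraic toolbox that all the checks reduce to. From $ZF=FX$ and $XF=FZ^{-1}$, each a one-line computation on the computational basis, one gets the Fourier conjugation rules $FXF^{\dagger}=Z$ and $FZF^{\dagger}=X^{-1}$. For the diagonal phase gates one checks by inspection that conjugation fixes $Z$ and sends $X$ to $X$ times a diagonal, for instance $V_1XV_1^{\dagger}=XZ$ and $V_1^{*}XV_1=XZ^{2}$, with the analogous monomials for $V_3$; and for the swap $V_2$ of $\ket1,\ket2$ one has $V_2XV_2=X^{-1}$ and $V_2ZV_2=Z^{-1}$. The remaining ingredients are the Weyl relation $ZX=\omega XZ$, used to normal-order every product into the shape $\omega^{c}X^{a}Z^{b}$, and the transpose identities $X^{T}=X^{-1}$, $Z^{T}=Z$, hence $(XZ^{j})^{T}=Z^{j}X^{-1}$. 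The role of the starred gates $V_1^{*}$ and $(V_1V_3)^{*}$ appearing only in the $Q^{\perp}$ summands is precisely to implement complex conjugation, which is what produces the transposed operator in that block.

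With these in hand the four bullets become routine. For the $A_x$ observables I would conjugate the $Q$-block pair $(X,X^2Z)$ by $F^{\dagger}V_1F$: the Fourier rule together with $V_1ZV_1^{\dagger}=Z$ fixes the first generator, while the same rules applied to $X^2Z$, after normal-ordering away the $\omega$ phases, give exactly $XZ$; on the $Q^{\perp}$-block the starred gate $V_1^{*}$ and the swap $V_2$ turn the same inputs into the transposes $X^{T}$ and $(XZ)^{T}$, which is \eqref{similar}. The $B_y$, $C_z^{(i)}$ block is structurally identical but conjugated by $V_3F$, whose Fourier part interchanges $X$ and $Z$ so that the pair lands on $Z$ and a phase-normalised power $ZX^{i}$; the relabelling $Q\leftrightarrow Q^{\perp}$ together with the extra gate in $\mathcal{U}_3$ then selects between \eqref{Bobs} and \eqref{Charlie}, i.e. between $ZX$ and $Z^2X$, according to whether $r_{1,k}=1$ or $2$, which is \eqref{Copt1}--\eqref{Copt2}. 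Finally the $D_w^{(i)}$ block follows by applying $V_1V_3$ (and its conjugate on $Q^{\perp}$) to the $B$-form \eqref{Bobs}: here $Z$ is fixed and the phase in front of $ZX$ is stripped to leave $X$, which one checks directly gives \eqref{Dopt1}--\eqref{Dopt2}.

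The only genuine difficulty will be bookkeeping rather than anything conceptual. Every conjugation returns a Weyl monomial only up to a root-of-unity scalar coming from $ZX=\omega XZ$ and from the diagonal gates, and one must confirm that these scalars cancel so that the outputs are \emph{exactly} $XZ^{i}$, $ZX^{i}$, $X$ and their transposes with no residual phase; otherwise the targets would be off by an overall $\omega^{c}$. Equally delicate is ensuring that the transpose lands on the $Q^{\perp}$ block and not on $Q$, which is sensitive both to the order in which $V_1^{*},V_2,(V_1V_3)^{*}$ are applied and to keeping the conjugation convention consistent with that of Fact~\ref{dupablada}. The cleanest way to remove all ambiguity is to carry out each block on the three basis vectors $\ket0,\ket1,\ket2$, where every operator acts as a monomial $\ket k\mapsto(\mathrm{phase})\ket{k+s}$ and both the shift $s$ and the phase can be read off directly; the $Q^{\perp}$ images then follow from the $Q$ images by the single observation that replacing a gate by its complex conjugate sends each such monomial to its transpose.
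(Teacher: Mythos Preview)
Your proposal is correct and is exactly what the paper intends: the corollary is stated as something that ``can be verified by a direct check'' and the paper gives no further argument, so your blockwise computation using the Fourier rules $FXF^{\dagger}=Z$, $FZF^{\dagger}=X^{-1}$, the phase-gate identities $V_1XV_1^{\dagger}=XZ$ etc., and the transpose identities $X^{T}=X^{-1}$, $Z^{T}=Z$ is precisely that direct check made explicit. Your observation that the $Q^{\perp}$ block is handled by replacing each diagonal gate with its complex conjugate (yielding the transpose of the $Q$-block output) is the right way to organise the bookkeeping and matches the pattern $\mathcal{U}_i=W\otimes Q+W^{*}\otimes Q^{\perp}$ (with the extra swap $V_2$ in $\mathcal{U}_1$) that the paper uses.
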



\end{document}